\documentclass[submission,copyright,creativecommons]{eptcs}
\usepackage{breakurl}             

\usepackage{amssymb}
\usepackage{amsmath}
\usepackage{stmaryrd}
\usepackage{prooftree}
\usepackage{amsthm}
\usepackage{pgf}
\usepackage{tikz}
\usepackage{verbatim}
\usetikzlibrary{arrows}
\usepackage{upgreek}
\usepackage{dsfont}
\usepackage{etoolbox}

\usepackage{appendix}
\setcounter{toappendix}{2}
\capturecounter{theorem}
\capturecounter{lemma}

\newtoggle{tech_report}

\togglefalse{tech_report}

\iftoggle{tech_report}{

}{
}


\long\def\ignore#1{\relax}

\newcommand\struto[1][15pt]{{\raise #1 \hbox{\strut}}}%
\newcommand\strutb[1][15pt]{{\raise-#1 \hbox{\strut}}}%


\setcounter{totalnumber}{50}
\setcounter{topnumber}{50}
\setcounter{bottomnumber}{50}
\floatsep5pt  
\intextsep5pt 

\interfootnotelinepenalty=1000000000



 \newcommand\toaux[1]{\immediate\write\@auxout{#1}}



\newcommand\olditem{}
\newcommand\olditemize{}
\newcommand\oldenditemize{}
\newcommand\oldenumerate{}
\newcommand\oldendenumerate{}
\let\olditem\item
\let\olditemize\itemize
\let\oldenditemize\enditemize
\let\oldenumerate\enumerate
\let\oldendenumerate\endenumerate

\newcommand\myitem{}
\makeatletter
\def\myitem{\@ifnextchar[\@myitemwith\@myitemwithout}
\long\def\@myitemwith[#1]{\olditem[{#1}]\unskip}
\long\def\@myitemwithout{\olditem\unskip}
\makeatother

\renewenvironment{itemize}[1][0]{%
  \def\item{\removelastskip\myitem}%
  \removelastskip\olditemize\removelastskip}
{\removelastskip\oldenditemize\removelastskip%
  \def\item\olditem%
}

\renewenvironment{enumerate}[1][0]{%
  \def\item{\removelastskip\myitem}%
  \removelastskip\oldenumerate\removelastskip}
{\removelastskip\oldendenumerate\removelastskip%
  \def\item\olditem%
}



\newcommand\mybox[1]{\fbox{\vbox{#1}}}
\renewcommand\[[1][3]{\par\removelastskip\vskip#1pt\vbox\bgroup\hrule height0pt\vfil\hbox to\hsize\bgroup\hfil\(}
\renewcommand\][1][3]{\)\hfil\egroup\vfil\hrule height0pt\egroup\vskip#1pt\nointerlineskip\noindent}

\newbox\columnsbox
\newbox\tmpbox
\newdimen\columnsheight
\newdimen\columnwidth
\newdimen\remainingwidth
\newdimen\textwidthsave
\def\mycolumnsheight{}

\newcommand\columns[1]{%
  \def\mycolumnsheight{}%
  \setlength\remainingwidth\textwidth%
  \setbox\columnsbox=\vbox\bgroup\vskip0pt\vfil\hbox to\textwidth\bgroup#1\egroup\vfil\egroup%
  \columnsheight=\ht\columnsbox%
  \def\mycolumnsheight{to\columnsheight}%
  \hrule height 0pt\vtop{\hbox to\wd\columnsbox\bgroup#1\egroup}%
}

\makeatletter

\def\commonpart{%
  \setlength\columnwidth{\wd\tmpbox}%
  \vtop{\vskip0pt\hbox to\columnwidth{{\box\tmpbox}}}%
  \advance\remainingwidth-\columnwidth%
  \setlength\textwidth\textwidthsave%
  \hsize\textwidthsave%
}
\def\column{\unskip\setlength\textwidthsave\textwidth\@ifnextchar[\@columnwith\@columnwithout}
\long\def\@columnwith[#1]#2{%
  \def\newhsize{#1\dimexpr\textwidth\relax}%
  \hsize\newhsize%
  \ifdim\hsize<0.1pt\hsize\remainingwidth\fi%
  \setlength\textwidth\hsize%
  \setbox\tmpbox=\hbox to\hsize\bgroup\hfil\vtop\mycolumnsheight{\vskip0pt#2\vskip0pt}\hfil\egroup%
  \commonpart%
}
\long\def\@columnwithout#1{%
  \hsize\remainingwidth%
  \setlength\textwidth\hsize%
  \setbox\tmpbox=\hbox\bgroup\vtop\mycolumnsheight{\vskip0pt#1\vskip0pt}\egroup%
  \commonpart%
}
\makeatother




\newcommand{\eqdef}{:=\ }
\newcommand{\recdef}{::=\ }





\newcommand\mathFomega{F_\omega}
\newcommand\Fomega{\ifmmode\mathFomega\else$\mathFomega$\fi}
\newcommand\mathFomegaC{F_\omega^{\mathcal C}}
\newcommand\FomegaC{\ifmmode\mathFomegaC\else$\mathFomegaC$\fi}
\newcommand\mathDNE{\mathrm{DNE}}
\newcommand\DNE{\ifmmode\mathDNE\else$\mathDNE$\fi}



\newcommand\gL{\mathfrak{L}}

\newcommand{\LabelSend}[2]{({#1}, {#2})}
\newcommand{\SemLabel}[1]{\llbracket {#1} \rrbracket}

\newcommand{\ValDom}{\mathfrak{D}}
\newcommand{\ValPar}{\bot}
\newcommand{\SemDeri}[1]{\llbracket {#1} \rrbracket}
\newcommand{\RewOver}[1]{\xrightarrow {#1}}
\newcommand{\RewOverOpt}[2]{\xrightarrow {#2}_{#1}}
\newcommand{\SeqEmpty}{\varepsilon}
\newcommand{\SeqEq}{\asymp}
\newcommand{\Config}[2]{[{#1}] {#2}}

\newcommand{\RewList}{\leq}
\newcommand{\ValEx}{\ValDom^{c}}
\newcommand{\AHinst}[1]{H({#1})}
\newcommand{\IntA}[1]{\llbracket {#1} \rrbracket}
\newcommand{\TakeRole}[1]{\text{takerole}_{#1}}

\newcommand{\EquivCl}[1]{[{#1}]}

\newcommand{\subst}[3]{{#1} \{{#2} \eqdef {#3}\}}
\newcommand{\TRec}[2]{\text{rec }{#1} . {#2}}

\newtheorem{theory}{Theory}
\newtheorem{proposition}{Proposition}
\newtheorem{definition}{Definition}
\newtheorem{lemma}{Lemma}
\newtheorem{theorem}{Theorem}
\newtheorem{example}{Example}
\newtheorem{remark}{Remark}

\newcommand{\extenv}[3]{{#1}[{#2} \leftarrow {#3}]}

\newcommand{\AAssign}[3]{{#1} \leftarrow {#2}({#3})}
\newcommand{\ATest}[2]{({#1}) \in {#2} ?}
\newcommand{\ReadVar}[1]{\text{rv}({#1})}
\newcommand{\WriteVar}[1]{\text{wv}({#1})}

\iftoggle{tech_report}{
\title{A Modular Formalization of Reversibility\\ for Concurrent Models
  and Languages (TR)\thanks{This work was partially supported by the
  Italian MIUR project PRIN 2010-2011 CINA, the French ANR project REVER n.\ 
ANR 11 INSE 007 and the European COST Action IC1405.}
}
}{
\title{A Modular Formalization of Reversibility\\ for Concurrent Models
  and Languages\thanks{This work was partially supported by the
  Italian MIUR project PRIN 2010-2011 CINA, the French ANR project REVER n.\ 
ANR 11 INSE 007 and the European COST Action IC1405.}
}
}


\author{Alexis Bernadet
\institute{Dalhousie University, Canada}
\email{bernadet@lix.polytechnique.fr}       
\and
Ivan Lanese
\institute{Focus Team, University of Bologna/INRIA, Italy}
\email{ivan.lanese@gmail.com}
}

\begin{document}

\maketitle

\begin{abstract}
Causal-consistent reversibility is the reference notion of reversibility for
concurrency.  We introduce a modular framework for defining causal-consistent
reversible extensions of concurrent models and languages.  We show how
our framework can be used to define reversible extensions of
formalisms as different as CCS and concurrent X-machines. The
generality of the approach allows for the reuse of theories and
techniques in different settings.
\end{abstract}

\section{Introduction}

\label{sec:Intro}

Reversibility in computer science refers to the possibility of
executing a program both in the standard forward direction, and
backward, going back to past states.
Reversibility appears in many settings, from the undo button present in most
text editors, to algorithms for rollback-recovery~\cite{Rollback}.
Reversibility is also used in state-space exploration, as in Prolog, or for
debugging~\cite{AkgulM04}.
Reversibility emerges naturally when modeling biological
systems~\cite{RevStru}, where many phenomena are naturally reversible, and in
quantum computing~\cite{QML}, since most quantum operations are reversible.
Finally, reversibility can be used to build circuits which are more energy
efficient than non-reversible ones~\cite{Landauer}.

Reversibility for concurrent systems has been tackled first
in~\cite{rccs}, mainly with biological motivations.
The standard definition of reversibility in a sequential setting, recursively
undo the last action, is not applicable in concurrent settings, where there
are many actions executing at the same time.
Indeed, a main contribution of~\cite{rccs} has been the definition of
\emph{causal-consistent} reversibility:
any action can be undone provided that all the actions depending on it (if any)
have already been undone. 
This definition can be applied to concurrent systems, and it is now a
reference notion in the field (non causal reversibility is also studied, e.g.,
in~\cite{PhillipsPathway}).
See~\cite{LMT:BEATCS} for a survey on causal-consistent reversibility.

Following~\cite{rccs}, causal-consistent reversible extensions of many
concurrent models and languages have been defined, using different
techniques~\cite{rhopi,revpi,revMuOz,revUlidowski,GiachinoLMT15,journal-rhopi}.
Nevertheless, the problem of finding a general procedure that given a
formalism defines its causal-consistent reversible extension is still
open: we tackle it here (we compare in Section~\ref{sec:Concl} with
other approaches to the same problem).  In more details, we present a
modular approach, where the problem is decomposed in three main steps.
The first step defines the information to be saved to enable
reversibility (in a sequential setting).  The second step concerns the
choice of the concurrency model used.  The last step automatically
builds a causal-consistent reversible extension of the given formalism
with the chosen concurrency model.

Our approach is not aimed at providing efficient (in terms of amount
of history information stored, or in terms of time needed to recover a
past state) reversible causal-consistent semantics, but at providing
guidelines to develop reversible causal-consistent semantics which are
correct by construction. Indeed, the relevant properties expected from
a causal-consistent reversible semantics will hold naturally because
of our construction. Also, it clarifies the design space of
causal-consistent reversibility, by clearly separating the sequential
part (step 1) from the part related to concurrency (step 2).

Hence, our approach can be used:
\begin{itemize}
\item in models and languages where one or more causal-consistent
  reversible semantics already exist (such as CCS, see
  Section~\ref{sec:RCCS}), to provide a reference model correct by
  construction, to compare against the existing ones and to classify
  them according to the choices needed in steps 1 and 2 to match them;
\item in models and languages where no causal-consistent reversible
  semantics currently exists (such as X-machines, see
  Section~\ref{sec:Automata}), to provide an idea on how such a
  semantics should look like, and which are the challenges to enable
  causal-consistent reversibility in the given setting.
\end{itemize}


Section~\ref{sec:IntroForm} gives an informal overview of our approach.
Section~\ref{sec:Formal} gives a formal presentation of the construction of a
reversible LTS extending a given one and proves that the resulting LTS
satisfies the properties expected from a causal-consistent reversible formalism.
In Section~\ref{sec:RCCS} we apply our approach to CCS.
In Section~\ref{sec:Automata} we show how to apply the same approach to systems
built around concurrent X-machines. Section~\ref{sec:Concl} compares with
related approaches and presents directions for future work.
\iftoggle{tech_report}{
Proofs missing from the main part are collected in
the Appendix.
}{
Missing proofs are collected in the companion technical report~\cite{TR}.
}

\section{Informal Presentation}

\label{sec:IntroForm}

We want to define a causal-consistent reversible
extension for a given formal model or language.  Assume that the model
or the language is formally specified by a calculus with terms $M$ whose
behavior is described by an LTS with transitions of the form:
\[M \RewOver u M'\]
To define its causal-consistent reversible extension using our
approach we need it to satisfy the following properties:
\begin{itemize}

\item The LTS is deterministic: If $M \RewOver u M_1$ and $M \RewOver u M_2$,
then $M_1 = M_2$.

\item The LTS is co-deterministic:
If $M_1 \RewOver u M'$ and $M_2 \RewOver u M'$, then $M_1 = M_2$.

\end{itemize}
In other words, the label $u$ should contain enough information on how
to go forward and backward. This is clearly too demanding, hence the
following question is natural:

\noindent\textbf{What to do if the LTS is not deterministic or not co-deterministic ?}

Note that this is usually the case. For example, in CCS~\cite{CCS}, a label is
either $\tau$, $a$ or $\overset{\_}{a}$, and we can have, e.g., $P \RewOver \tau
P_1$ and $P \RewOver \tau P_2$ with $P_1 \neq P_2$.

What we can do is to \emph{refine} the labels and, as a consequence,
the calculus, by adding information to them.  Therefore, if we have an
LTS with terms $M$ and labels $\alpha$ which is not deterministic or
not co-deterministic, we have to define:
\begin{itemize}

\item A new set of labels ranged over by $u$.

\item A new LTS with the same terms $M$ and with labels $u$ which is
deterministic and co-deterministic.

\item An interpretation $\SemDeri u = \alpha$ for each label $u$ such that:\\
$M \RewOver u M'$ iff $M \RewOver {\SemDeri u} M'$
(correctness of the refinement).
\end{itemize}
\begin{remark}[A Naive Way of Refining Labels]\mbox{}\\
\label{rmk:NaiveRefineLabel}
A simple way of refining labels $\alpha$ is as follows:
\begin{itemize}

\item Labels $u$ are of the form $(M, \alpha, M')$ for each $M$,
  $\alpha$ and $M'$ with $M \RewOver \alpha M'$.

\item We have only transitions of the form
$M_1 \RewOver {(M_1,\alpha,M_2)} M_2$.
This LTS is trivially deterministic and co-deterministic.

\item We define $\SemDeri {(M, \alpha, M')} = \alpha$.
The correctness of this refinement is trivial.
\end{itemize}
Therefore, it is always possible to refine an LTS to ensure determinism and
co-determinism.
Unfortunately, as we will see later, this way of refining is not suitable for
our aims, since we want some transitions, notably concurrent ones,
 to commute without changing their
labels, and this is not possible with the refinement above.

\end{remark}
Assume now that we have an LTS which is deterministic and co-deterministic
with terms $M$ and labels $u$, and a computation:
\[M_0 \RewOver {u_1} M_1 \ldots \RewOver {u_n} M_n\]
From co-determinism, if we only have the labels $u_1$, \ldots $u_n$ and the
last term $M_n$, we can retrieve the initial term $M_0$ and all the
intermediate terms $M_1 \ldots M_{n - 1}$.
As a consequence, we can use the following notation without losing
information:
\[M_0 \RewOver {u_1} \ldots \RewOver {u_n} M_n\]
Therefore, only the labels are needed to describe the history of a particular
execution that led to a given term.

Hence, to introduce reversibility it is natural to define configurations and
transitions as follows, where $(L\RewOver u)$ denotes the list obtained by appending $u$ after $L$:
\begin{itemize}

\item Configurations $R$ are of the form $(L, M)$ with $L$ a sequence of labels
$u_1, \ldots u_n$ such that there exists $M'$ with
$M' \RewOver {u_1} \ldots \RewOver {u_n} M$
(notice that $M'$ is unique).

\item Forward transitions:
If $M \RewOver u M'$, then $(L, M) \RewOver u ((L\RewOver u), M')$.

\item Backward transitions:
If $M \RewOver u M'$, then $((L\RewOver u), M') \RewOver {u^{-1}} (L, M)$.
\end{itemize}
In the LTS above, the terms are configurations $R$, and the
labels are either of the form $u$ (move forward) or $u^{-1}$ (move
backward).  This new formalism is indeed reversible. This can be proved
by showing that the Loop lemma~\cite[Lemma 6]{rccs}, requiring each
step to have an inverse, holds.
The main limitation of this way of introducing reversibility is that a
configuration can only do the backward step that cancels the last
forward step: If $R_1 \RewOver u R_2$ and $R_2 \RewOver {v^{-1}} R_3$,
then $u = v$ and $R_1 = R_3$.
This form of reversibility is suitable for a sequential setting, where
actions are undone in reverse order of completion. In a concurrent
setting, as already discussed in the Introduction, the suitable notion
of reversibility is causal-consistent reversibility, where any action
can be undone provided that all the actions depending on it (if any)
have already been undone. We show now how to generalize our model
so to obtain causal-consistent reversibility.

First, we require a symmetric relation $\ValPar$ on labels $u$.
Intuitively, $u \ValPar v$ means that the actions described by $u$ and
$v$ are independent and can be executed in any order. In concurrent
systems, a sensible choice for $\ValPar$ is to have $u \ValPar v$ if and only if
the corresponding transitions are concurrent. By choosing instead $u
\ValPar v$ if and only if $u=v$ we recover the sequential setting. Indeed,
causal-consistent reversibility coincides with sequential
reversibility if no actions are concurrent.

The only
property on $\ValPar$ that we require, besides being symmetric, is
the following one:

If $M_1 \RewOver u M_2$, $M_2 \RewOver v M_3$ and $u \ValPar v$, then
there exists $M_2'$ such that $M_1 \RewOver v M_2'$ and $M_2' \RewOver u M_3$
(\emph{co-diamond property}).

Thanks to this property, we can define an equivalence relation
$\SeqEq$ on sequences $L$ of labels: $L \SeqEq L'$ if and only if $L'$
can be obtained by a sequence of permutations of consecutive $u$ and
$v$ in $L$ such that $u \ValPar v$.

We can now generalize the definition of configuration $R = (L, M)$ by replacing
 $L$ by its equivalence class $\EquivCl L$ w.r.t.\ $\SeqEq$.
In other words, a configuration $R$ is now of the form $(\EquivCl L, M)$.
Actually, $\EquivCl L$ is a Mazurkiewicz trace~\cite{Maz88}.
Transitions are generalized accordingly.

For example, if $u \ValPar v$ we can have transition sequences such as:
\[(\EquivCl L, M_1) \RewOver u (\EquivCl {(L\RewOver u)}, M_2)
\RewOver v (\EquivCl{(L\RewOver u\RewOver v)}, M_3)
\RewOver {u^{-1}} (\EquivCl{(L\RewOver v)}, M_4)\]
because, since $(L\RewOver u\RewOver v) \SeqEq (L\RewOver v\RewOver u)$, we have
$\EquivCl{(L\RewOver u\RewOver v)} = \EquivCl{(L\RewOver v\RewOver u)}$.

We will show that the formalism that we obtain in this way is
reversible (the Loop lemma still holds) in a causal-consistent way.

Therefore, the work of defining a causal-consistent reversible
extension of a given LTS can be split into the three steps
below. 
\begin{enumerate}

\item\label{step:label} Refine the labels of the transitions.

\item\label{step:conc} Define a suitable relation $\ValPar$ on the newly defined
labels.

\item\label{step:construction} Define the configurations $R$ and the
  forward and backward transitions with the construction given above.

\end{enumerate}

Notice that step~\ref{step:label} depends on the semantics of
the chosen formalism and step~\ref{step:conc} depends on the chosen
concurrency model. These two steps are not
automatic. Step~\ref{step:construction} instead is a construction that
does not depend on the chosen formalism and which is completely
mechanical.  This modular approach has the advantage of allowing the
reuse of theories and techniques, in particular the ones referred to
step~\ref{step:construction}. Also, it allows one to better compare
different approaches, by clearly separating the choices related to the
concurrency model (step~\ref{step:conc}) from the ones related to the
(sequential) semantics of the formalism (step~\ref{step:label}).

Step~\ref{step:label} is tricky: we have to be careful when refining
the labels. We must add enough information to labels so that the LTS
becomes deterministic and co-deterministic.  However, the labels must
also remain unchanged when permuting two independent steps.  Therefore,
if we want to allow as many permutations as possible, we need to limit
the amount of information added to the labels.  For example, the
refinement given in Remark~\ref{rmk:NaiveRefineLabel}
only allows trivial permutations,
and in this case $\SeqEq$ can only be the identity.

\section{Introducing reversibility, formally}

\label{sec:Formal}

To apply the construction informally described in
Section~\ref{sec:IntroForm}, we need a formalism expressed as an LTS
that satisfies the properties of Theory~\ref{ax:Val}.

\begin{theory}
\label{ax:Val}
We have the following objects:
\begin{itemize}

\item A set $\ValDom$ of labels with a symmetric relation $\ValPar$ on $\ValDom$.

\item An LTS with terms $M$ and transitions $\RewOver u$ with
  labels $u \in \ValDom$.
\end{itemize}
The objects above satisfy the following properties:
\begin{itemize}
\item Determinism: If $M \RewOver u M_1$ and $M \RewOver u M_2$, then $M_1 = M_2$.

\item Co-determinism: If $M_1 \RewOver u M'$ and $M_2 \RewOver u M'$, then
$M_1 = M_2$.

\item Co-diamond property:
If $M_1 \RewOver u M_2$, $M_2 \RewOver v M_3$ and $u \ValPar v$, then
there exists $M_2'$ such that $M_1 \RewOver v M_2'$ and $M_2' \RewOver u M_3$.
\end{itemize}
\end{theory}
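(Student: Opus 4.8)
Strictly speaking, Theory~\ref{ax:Val} is an axiomatic specification rather than a proposition with a proof: the three clauses (determinism, co-determinism, co-diamond) are \emph{hypotheses} delimiting the class of labelled transition systems to which the construction of Section~\ref{sec:Formal} applies, not conclusions to be derived. So the only proof obligation one can attach to it is twofold: first, to check that the class is non-empty (so the framework is not vacuous), and second --- the substantive task, carried out in Sections~\ref{sec:RCCS} and~\ref{sec:Automata} --- to verify the three clauses for each concrete formalism one wishes to make reversible.

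For non-emptiness the plan is simply to invoke Remark~\ref{rmk:NaiveRefineLabel}: the naive refinement that tags every transition with the triple $(M,\alpha,M')$ yields an LTS that is trivially deterministic and co-deterministic, and taking $u \ValPar v$ iff $u = v$ makes the co-diamond property vacuous (its antecedent $u \ValPar v$ forces $M_1 = M_3$ by co-determinism, whence $M_2' = M_2$ works). Hence every LTS admits at least one refinement satisfying Theory~\ref{ax:Val}, so the framework is in principle applicable to any formalism.

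For a genuine instantiation the plan follows the shape dictated by steps~\ref{step:label}--\ref{step:conc} of the Informal Presentation. First I would fix a label refinement (step~\ref{step:label}) rich enough that (i) from $M$ and $u$ the successor $M'$ is determined and (ii) from $M'$ and $u$ the predecessor $M$ is determined; concretely, the refined label should record, alongside the original action, the syntactic location where the rule fires (a pointer into the process for CCS, the chosen transition/agent for an X-machine) together with whatever local data is consumed or produced. Determinism and co-determinism are then proved by induction on the derivation of the refined transition, reading the firing location (and the produced/consumed data) back off the label. Second I would fix $\ValPar$ (step~\ref{step:conc}) as the independence relation of the chosen concurrency model --- for CCS, two steps whose locations lie in disjoint parallel components --- and prove the co-diamond property by the standard concurrency-diamond argument of that model, checking in addition that the two residual transitions carry \emph{exactly} the same refined labels $u$ and $v$ as the originals, so that the permutation is label-preserving.

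The main obstacle is the tension flagged at the close of Section~\ref{sec:IntroForm}: step~\ref{step:label} must add enough information to secure determinism and co-determinism, yet not so much that $\ValPar$-related labels change when two independent steps are permuted --- too coarse a refinement breaks determinism, too fine a one (such as the naive $(M,\alpha,M')$ triple) collapses $\SeqEq$ to the identity and defeats causal consistency. Getting this balance right for each formalism --- deciding precisely which fragment of the term and which local data to expose in the label --- is where the real design work lies; the co-diamond verification itself is then essentially the classical diamond lemma for that model, re-stated so as to track the refined labels.
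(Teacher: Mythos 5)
You are right that Theory~1 is an axiomatic specification with no proof in the paper — the paper simply assumes its clauses throughout Section~3 and discharges them separately for each instantiation (Theorem on CCS by induction on the label, and the X-machine examples via the determinism/co-determinism of each $a(i)$). Your reading and your outline of how the clauses get verified in practice match the paper's approach, so there is nothing to correct beyond a small slip in the vacuity argument (the naive refinement does not need co-determinism to force $M_1=M_3$; taking $M_2'=M_2$ already closes the co-diamond when $u=v$).
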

In the rest of this section, we assume to have the objects and
properties of Theory~\ref{ax:Val}.
For the formal definition of configurations we use
Mazurkiewicz traces~\cite{Maz88} (see Remark~\ref{rmk:Maz}, later on).
We give here a self-contained construction.

If we have the following sequence of transitions:
\[M_0 \RewOver {u_1} M_1 \RewOver {u_2} \ldots
\RewOver {u_{n - 1}} M_{n - 1} \RewOver {u_n} M_n\]
then the initial term $M_0$ and all the intermediate terms
$M_1$, \ldots $M_{n - 1}$ can be retrieved from
$u_1$, \ldots $u_n$ and $M_n$.
Therefore, by writing:
\[M_0 \RewOver {u_1} \RewOver {u_2} \ldots \RewOver {u_{n - 1}}
\RewOver {u_n} M_n\]
we do not lose any information.

We would like to manipulate formally transition sequences
as mathematical objects.
Hence it makes sense to use the following notation:
\begin{itemize}
\item A sequence $L = u_1$, \ldots $u_n$ of elements in $\ValDom$ is
  written $L = \RewOver{u_1} \ldots \RewOver{u_n}$.
\item The concatenation of $L_1$ and $L_2$ is written $L_1L_2$ and the
  empty sequence is written $\SeqEmpty$. $|L|$ denotes the length of
  sequence $L$.
\end{itemize}
Moreover, we want to consider sequences of
transitions up to permutations of independent steps.
\begin{definition}
\label{def:SeqEq}
The judgment $L \SeqEq L'$ is defined by the following rules:
\[\infer{L \SeqEq L}{} \qquad
\infer{L \SeqEq (L_1 \RewOver v \RewOver u L_2)}
{L \SeqEq (L_1 \RewOver u \RewOver v L_2) \quad u \ValPar v}
\]
\end{definition}
In other words, $L \SeqEq L'$ iff $L'$ can be obtained by doing
permutations of consecutive independent labels.
We can check that $\SeqEq$ satisfies the following properties:
\begin{toappendix}

\appendixbeyond 0

\begin{lemma}[Properties of $\SeqEq$]\strut

\label{lem:SeqEq}

\begin{enumerate}


\item $\SeqEq$ is an equivalence relation.

\item If $L_1 \SeqEq L_1'$ and $L_2 \SeqEq L_2'$, then
$(L_1 L_2) \SeqEq (L_1' L_2')$, that is relation $\SeqEq$ is closed under concatenation.

\item If $(L_1 \RewOver u) \SeqEq {L_2}$, then there exist $L_3$ and
$L_4$ such that $L_2 = (L_3 \RewOver u L_4)$, $L_1 \SeqEq (L_3 L_4)$ and for all
$v$ in $L_4$, $v \neq u$ and $u \ValPar v$.

\item If $(L_1 \RewOver u) \SeqEq (L_2 \RewOver u)$, then $L_1 \SeqEq L_2$.

\item If for all $v \in L$, $u \ValPar v$, then
$(L \RewOver u) \SeqEq (\RewOver u L)$.

\item If $L_1 \SeqEq L_2$, then $|L_1| = |L_2|$, that is relation $\SeqEq$ is length preserving. 
\end{enumerate}
\end{lemma}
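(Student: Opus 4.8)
The plan is to prove the six items essentially in the listed order, but deferring the symmetry half of item~1 until transitivity is available, and obtaining item~4 as a corollary of item~3. Item~6 (length preservation) comes first and for free: an induction on the derivation of $L_1 \SeqEq L_2$, since the reflexivity rule preserves length and rewriting $L_1 \RewOver u \RewOver v L_2$ into $L_1 \RewOver v \RewOver u L_2$ does not change it. For item~1, reflexivity is built into the rules; I would establish transitivity next, by induction on the derivation of the second judgment $L' \SeqEq L''$: the base case is trivial, and in the permutation step (which reaches $L'' = L_1 \RewOver v \RewOver u L_2$ from some $\widehat L = L_1 \RewOver u \RewOver v L_2$ with $u \ValPar v$) the induction hypothesis gives $L \SeqEq \widehat L$, and appending the very same permutation step to that derivation yields $L \SeqEq L''$. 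Symmetry then follows by induction on the derivation of $L \SeqEq L'$: in the permutation step from $\widehat L = L_1 \RewOver u \RewOver v L_2$ to $L' = L_1 \RewOver v \RewOver u L_2$, the induction hypothesis gives $\widehat L \SeqEq L$, while one application of the permutation rule to the reflexivity axiom at $L'$ (using symmetry of $\ValPar$) gives $L' \SeqEq \widehat L$, so transitivity closes the case.

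For item~2 it suffices to prove $(L_1 L_2) \SeqEq (L_1' L_2)$ from $L_1 \SeqEq L_1'$ and $(L_1' L_2) \SeqEq (L_1' L_2')$ from $L_2 \SeqEq L_2'$, then compose by transitivity; each is a routine induction on the corresponding derivation, since a permutation occurring inside $L_1$ (resp.\ $L_2$) remains a legal permutation at the corresponding positions of $L_1 L_2$ (resp.\ $L_1' L_2$). This closure under prefixing and suffixing will be reused in items~3 and~5. Item~3 is the core. I would induct on the derivation of $(L_1 \RewOver u) \SeqEq L_2$, tracking a \emph{distinguished} occurrence of $u$, namely the one sitting at the boundary between $L_3$ and $L_4$. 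In the base case $L_2 = L_1 \RewOver u$, take $L_3 = L_1$ and $L_4 = \SeqEmpty$. In the inductive step the last rule rewrites some $\widehat L = A \RewOver a \RewOver b B$ into $L_2 = A \RewOver b \RewOver a B$ with $a \ValPar b$, and the induction hypothesis provides a decomposition $\widehat L = L_3 \RewOver u L_4$ with $L_1 \SeqEq (L_3 L_4)$ and, for every $v \in L_4$, both $v \neq u$ and $u \ValPar v$. If $a = b$ the rewrite is a no-op ($L_2 = \widehat L$) and the same decomposition works.

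Otherwise I case-split on the position of the swapped pair $a,b$ relative to the distinguished $u$: (i) the pair lies entirely inside $L_3$; (ii) it lies entirely inside $L_4$; (iii) $u$ is the first element of the pair, i.e.\ $u=a$; (iv) $u$ is the second, i.e.\ $u=b$. A short computation on lengths shows these are exhaustive. In (i), write $L_3 = A \RewOver a \RewOver b C$ and $B = C \RewOver u L_4$; take $L_3' = A \RewOver b \RewOver a C$ and $L_4' = L_4$. In (ii), write $A = L_3 \RewOver u D$ and $L_4 = D \RewOver a \RewOver b B$; take $L_3' = L_3$ and $L_4' = D \RewOver b \RewOver a B$, which has the same underlying labels as $L_4$, so the invariant survives; note also $a,b \in L_4$ forces $u \ValPar a$, $u \ValPar b$. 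In both (i) and (ii), $L_2 = L_3' \RewOver u L_4'$ holds by construction and $L_1 \SeqEq (L_3' L_4')$ is obtained by appending one permutation step (legal since $a \ValPar b$) to the hypothesis $L_1 \SeqEq (L_3 L_4)$. In (iii) one has $L_3 = A$, $L_4 = \RewOver b B$, and $L_2 = A \RewOver b \RewOver u B$; take $L_3' = A \RewOver b$, $L_4' = B$, so that $L_3' L_4' = L_3 L_4$ and both $L_1 \SeqEq (L_3' L_4')$ and the invariant on $B$ are inherited verbatim. In (iv) one has $L_3 = A \RewOver a$, $L_4 = B$, and $L_2 = A \RewOver u \RewOver a B$; take $L_3' = A$, $L_4' = \RewOver a B$; again $L_3' L_4' = L_3 L_4$, the invariant on $B$ is inherited, and for the new head $a$ we get $u \ValPar a$ by symmetry of $\ValPar$ (from $a \ValPar b = u$) and $a \neq u$ since $a \neq b = u$ — this last point is exactly why the no-op case had to be split off first. \emph{I expect the bookkeeping in item~3 — making the position case-split exhaustive and verifying that the $L_4$-invariant is preserved, especially in case~(iv) — to be the main obstacle; everything else is routine induction.}

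Finally, item~4 follows by applying item~3 to $(L_1 \RewOver u) \SeqEq (L_2 \RewOver u)$: it gives $L_3, L_4$ with $L_2 \RewOver u = L_3 \RewOver u L_4$, $L_1 \SeqEq (L_3 L_4)$, and $v \neq u$ for all $v \in L_4$; if $L_4 \neq \SeqEmpty$ its last element equals the last element of $L_2 \RewOver u$, namely $u$, contradicting $v \neq u$, so $L_4 = \SeqEmpty$, whence $L_3 = L_2$ and $L_1 \SeqEq (L_3 L_4) = L_2$. For item~5, induct on $L$: the case $L = \SeqEmpty$ is reflexivity; if $L = \RewOver w L'$ with $u \ValPar w$ and $u \ValPar v$ for all $v \in L'$, the induction hypothesis gives $(L' \RewOver u) \SeqEq (\RewOver u L')$, item~2 lifts this to $(\RewOver w L' \RewOver u) \SeqEq (\RewOver w \RewOver u L')$, one permutation step using $w \ValPar u$ gives $(\RewOver w \RewOver u L') \SeqEq (\RewOver u \RewOver w L')$, and transitivity (item~1) yields $(L \RewOver u) \SeqEq (\RewOver u L)$.
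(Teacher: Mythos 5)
Your proof is correct and follows essentially the same route as the paper's: derivation inductions for items 1, 2, 3 and 6, item 4 as a corollary of item 3, and induction on the length of $L$ for item 5. The only difference is one of detail — the paper dispatches item 3 with a one-line remark about which labels may move past the last $u$, whereas you carry out the full position case analysis, which is correct and exhaustive as you argue.
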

\end{toappendix}
\begin{toappendix}[]

\begin{proof}
\mbox{}\\
\begin{enumerate}


\item
\begin{itemize}

\item Reflexivity: By definition.

\item Transitivity: We prove by induction on the derivation of $L_2
  \SeqEq L_3$ that if $L_1 \SeqEq L_2$ and $L_2 \SeqEq L_3$, then $L_1
  \SeqEq L_3$.

\item Symmetry: We prove by induction on the derivation of $L_1 \SeqEq
  L_2$ that if $L_1 \SeqEq L_2$, then $L_2 \SeqEq L_1$.  In
  particular, we use the definition of $\SeqEq$ and transitivity.

\end{itemize}

\item First, we prove by induction on the derivation of $L_1 \SeqEq
  L_2$ that if $L_1 \SeqEq L_2$, then $(L_3 L_1 L_4) \SeqEq (L_3 L_2
  L_4)$.  Then, from $L_1 \SeqEq L_1'$ and $L_2 \SeqEq L_2'$, we can
  deduce $(L_1 L_2) \SeqEq (L_1' L_2)$ and $(L_1' L_2) \SeqEq (L_1'
  L_2')$.  Hence, by transitivity $(L_1 L_2) \SeqEq (L_1' L_2')$.

\item By induction on the derivation of $(L_1 \RewOver u) \SeqEq L_2$.
Intuitively, every $v$ that moves to the right of the last $u$ must satisfy
$u \ValPar v$.

\item Corollary of item 3.

\item By induction on the length of $L$.

\item By induction on the derivation of $L_1 \SeqEq L_2$.

\end{enumerate}

\end{proof}

\end{toappendix}
We now define formally the judgment $M L M'$, representing a sequence
of transitions (also called a computation) with labels in $L$ starting
in $M$ and ending in $M'$, and prove some of its properties in
Lemma~\ref{lem:SeqVal}.
\begin{definition}
\label{def:SeqVal}
The judgment $M L M'$ is defined by the following rules:
\[\infer{M \SeqEmpty M}{} \qquad
\infer{M (L \RewOver u) M''}
{M L M' \quad M' \RewOver u M''}
\]
\end{definition}
\begin{toappendix}
\appendixbeyond 0
\begin{lemma}[Properties of $M L M'$]\strut

\label{lem:SeqVal}

\begin{enumerate}

\item The notation $M L M'$ is a conservative extension of the notation
$M \RewOver u M'$.

\item If $M L_1 M'$ and $M' L_2 M''$, then $M (L_1 L_2) M''$.

\item If $M_1 L M'$ and $M_2 L M'$, then $M_1 = M_2$.

\item If $M (L_1 L_2) M'$, then there exists $M''$ such that
$M L_1 M''$ and $M'' L_2 M'$.

\item If $M L M'$ and $L \SeqEq L'$, then $M L' M'$.
\end{enumerate}
\end{lemma}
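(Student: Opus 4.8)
The plan is to prove items~1--4 by straightforward inductions on the structure of the relevant label sequence (equivalently, on the derivation built from the two rules of Definition~\ref{def:SeqVal}), and to reserve the real work for item~5, which is the only place where the co-diamond property of Theory~\ref{ax:Val} is actually needed.

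For item~1, I would instantiate the rules of Definition~\ref{def:SeqVal} with $L = \SeqEmpty$: from $M \SeqEmpty M$ and $M \RewOver u M'$ one obtains $M (\RewOver u) M'$, and conversely any derivation of $M (\RewOver u) M'$ must end with the second rule applied to $M \SeqEmpty M$, so $M (\RewOver u) M'$ holds iff $M \RewOver u M'$. For item~2, I would do induction on $L_2$: if $L_2 = \SeqEmpty$ then $M' = M''$ and $L_1 L_2 = L_1$, so the conclusion is the hypothesis $M L_1 M'$; if $L_2 = (L_2' \RewOver u)$, then $M' L_2' N$ and $N \RewOver u M''$ for some $N$, so $M (L_1 L_2') N$ \ih, and one more application of the second rule gives $M (L_1 L_2' \RewOver u) M'' = M (L_1 L_2) M''$. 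Item~3 is an induction on $L$ using co-determinism: for $L = \SeqEmpty$ we get $M_1 = M' = M_2$; for $L = (L' \RewOver u)$ we have $M_1 L' N_1$, $N_1 \RewOver u M'$ and $M_2 L' N_2$, $N_2 \RewOver u M'$, so $N_1 = N_2$ by co-determinism and then $M_1 = M_2$ \ih. Item~4 is again an induction on $L_2$: the witness for $L_2 = \SeqEmpty$ is $M'' = M'$, and in the inductive case one inverts the last rule of the derivation of $M (L_1 L_2) M'$ and invokes the induction hypothesis on the shorter tail.

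The crux is item~5, which I would prove by induction on the derivation of $L \SeqEq L'$ (Definition~\ref{def:SeqEq}). The reflexivity rule is immediate. For the permutation rule, $L' = (L_1 \RewOver v \RewOver u L_2)$ is obtained from $L \SeqEq (L_1 \RewOver u \RewOver v L_2)$ with $u \ValPar v$, and by the induction hypothesis $M (L_1 \RewOver u \RewOver v L_2) M'$. Using item~4 twice together with the definition of $M L M'$ on two-step sequences, decompose this computation as $M L_1 M_1$, $M_1 \RewOver u M_2$, $M_2 \RewOver v M_3$, $M_3 L_2 M'$. Since $u \ValPar v$, the co-diamond property yields $M_2'$ with $M_1 \RewOver v M_2'$ and $M_2' \RewOver u M_3$, hence $M_1 (\RewOver v \RewOver u) M_3$; gluing the three pieces $M L_1 M_1$, $M_1 (\RewOver v \RewOver u) M_3$, $M_3 L_2 M'$ back together with item~2 gives $M (L_1 \RewOver v \RewOver u L_2) M' = M L' M'$, as required.

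I expect the only genuine obstacle to be item~5: one must feed item~4 exactly the right decomposition, isolating the adjacent pair $\RewOver u \RewOver v$ to be swapped, apply co-diamond to that two-step segment, and reassemble with item~2. This is precisely the pattern the co-diamond property was designed to support, so no surprises are anticipated; everything else reduces to structural induction on label sequences combined with the determinism/co-determinism clauses of Theory~\ref{ax:Val}.
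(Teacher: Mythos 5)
Your proposal is correct and follows essentially the same route as the paper: items 1--4 by structural induction on the derivations of Definition~\ref{def:SeqVal} (using co-determinism for item 3), and item 5 by induction on the derivation of $L \SeqEq L'$, decomposing with item 4, applying the co-diamond property to the swapped pair, and reassembling with item 2. You have simply spelled out the details that the paper leaves as one-line proof sketches.
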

\end{toappendix}
\begin{toappendix}[]
\begin{proof}
\mbox{}\\
\begin{enumerate}

\item Straightforward.

\item By induction on the derivation of $M' L_2 M''$.

\item By induction on the derivation of $M_1 L M'$.

\item By induction on the derivation of $M L_1 L_2 M'$.

\item By induction on the derivation of $L \SeqEq L'$ and by using item 4.

\end{enumerate}
\end{proof}
\end{toappendix}
The following lemma will be necessary to prove Theorem~\ref{th:RewConfSeq}:
\begin{toappendix}
\begin{lemma}
\label{lem:RewSeq}
If $(L_1 L_3) \SeqEq (L_2 L_4)$ and $(L_1 L_5) \SeqEq (L_2 L_6)$, then there
exist $L_7$, $L_8$, $L_3'$, $L_4'$, $L_5'$, $L_6'$ such that
$L_3 \SeqEq (L_7 L_3')$, $L_4 \SeqEq (L_8 L_4')$, $L_5 \SeqEq (L_7 L_5')$,
$L_6 \SeqEq (L_8 L_6')$, $L_3' \SeqEq L_4'$ and $L_5' \SeqEq L_6'$.
\end{lemma}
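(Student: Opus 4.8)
\emph{Approach.} I would prove Lemma~\ref{lem:RewSeq} by induction on $|L_1|$, using the properties of $\SeqEq$ from Lemma~\ref{lem:SeqEq} together with their ``front'' counterparts. These follow from the fact that $\SeqEq$ is stable under reversal of sequences: reversing a sequence turns a swap of adjacent independent labels $\RewOver u\RewOver v$ into a swap $\RewOver v\RewOver u$, which is legal because $\ValPar$ is symmetric, so $L\SeqEq L'$ implies $L^{R}\SeqEq L'^{R}$ (by induction on the derivation). Consequently items (3) and (4) of Lemma~\ref{lem:SeqEq} also hold with the distinguished label at the \emph{front}: if $(\RewOver u L)\SeqEq L''$ then $L'' = L_c\RewOver u L_d$ with $L\SeqEq L_cL_d$ and every $v$ in $L_c$ satisfying $v\neq u$ and $u\ValPar v$; and if $(\RewOver u L_1)\SeqEq(\RewOver u L_2)$ then $L_1\SeqEq L_2$. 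In the base case $L_1=\SeqEmpty$ the hypotheses read $L_3\SeqEq L_2L_4$ and $L_5\SeqEq L_2L_6$, so one takes $L_7:=L_2$, $L_8:=\SeqEmpty$, $L_3':=L_4':=L_4$ and $L_5':=L_6':=L_6$; the six required facts are then immediate from the hypotheses and reflexivity of $\SeqEq$.

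\emph{Inductive step.} Write $L_1=\RewOver u L_1^{\circ}$, so $u$ sits at the front of $L_1L_3$, and split on whether $u$ can be brought to the front of $L_2$, i.e.\ whether $L_2\SeqEq\RewOver u L_2^{\circ}$ for some $L_2^{\circ}$. If it can, then closure under concatenation (Lemma~\ref{lem:SeqEq}(2)) rewrites the hypotheses as $\RewOver u(L_1^{\circ}L_3)\SeqEq\RewOver u(L_2^{\circ}L_4)$ and $\RewOver u(L_1^{\circ}L_5)\SeqEq\RewOver u(L_2^{\circ}L_6)$, and cancelling the leading $u$ (the mirrored form of Lemma~\ref{lem:SeqEq}(4)) gives $L_1^{\circ}L_3\SeqEq L_2^{\circ}L_4$ and $L_1^{\circ}L_5\SeqEq L_2^{\circ}L_6$; since $|L_1^{\circ}|<|L_1|$, the induction hypothesis returns exactly the required tuple. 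If $u$ cannot be brought to the front of $L_2$, I apply the mirrored Lemma~\ref{lem:SeqEq}(3) to $\RewOver u(L_1^{\circ}L_3)\SeqEq L_2L_4$, writing $L_2L_4=L_c\RewOver u L_d$ with $L_1^{\circ}L_3\SeqEq L_cL_d$ and $u\ValPar v$ for all $v$ in $L_c$. This occurrence of $u$ cannot fall inside the $L_2$-part: if it did, $u$ would be independent of the whole prefix of $L_2$ preceding it, and Lemma~\ref{lem:SeqEq}(5) would move it to the front of $L_2$, contradicting the case. Hence $L_c=L_2L_c^{*}$ and $L_4=L_c^{*}\RewOver u L_d$, so with $L_4^{\circ}:=L_c^{*}L_d$ we get $L_4\SeqEq\RewOver u L_4^{\circ}$ (Lemma~\ref{lem:SeqEq}(5) and (2)), $u\ValPar v$ for every $v$ in $L_2$, and $L_1^{\circ}L_3\SeqEq L_2L_4^{\circ}$. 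The identical argument on the second hypothesis yields $L_6\SeqEq\RewOver u L_6^{\circ}$ and $L_1^{\circ}L_5\SeqEq L_2L_6^{\circ}$. Feeding $L_1^{\circ}L_3\SeqEq L_2L_4^{\circ}$ and $L_1^{\circ}L_5\SeqEq L_2L_6^{\circ}$ to the induction hypothesis produces $L_7,K_8,L_3',L_4',L_5',L_6'$, and then $L_8:=\RewOver u K_8$ works, since $L_4\SeqEq\RewOver u L_4^{\circ}\SeqEq\RewOver u(K_8L_4')=L_8L_4'$ and likewise $L_6\SeqEq L_8L_6'$ by (2), while the remaining four facts are inherited from the induction hypothesis.

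\emph{Main obstacle.} The delicate point is the second case of the inductive step: one must argue, \emph{consistently for both hypotheses}, that the freshly exposed leading $u$ originates from $L_4$ (resp.\ $L_6$) rather than from $L_2$, and then move it leftward past the whole of $L_2$ — and it is precisely here that the case assumption and the independence $u\ValPar v$ for $v$ in $L_2$ (recovered from the mirrored Lemma~\ref{lem:SeqEq}(3)) are needed. Everything else — the base case, the first case, and the final reassembly of the tuple — is routine rewriting with Lemma~\ref{lem:SeqEq}, once its front-variant statements are in place.
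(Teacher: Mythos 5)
Your proof is correct, but it takes a genuinely different route from the paper's. The paper proceeds by induction on $|L_3|+|L_4|+|L_5|+|L_6|$, peeling a label off the \emph{right} end of one of the suffixes and running an intricate occurrence-counting argument (tracking where the $n$-th occurrence of a label lands in an equivalent sequence, via a nested intermediate claim) to decide whether that label can be matched inside $L_2$ or must be matched inside the corresponding suffix. You instead induct on $|L_1|$, peel the head $u$ off $L_1$, and split on whether $u$ can be commuted to the front of $L_2$; the case analysis is then discharged by ``front'' mirror images of items (3)--(5) of Lemma~\ref{lem:SeqEq}, which you correctly justify via stability of $\SeqEq$ under sequence reversal (legitimate because $\ValPar$ is symmetric). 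The key point that makes your argument go through — and which you identify explicitly — is that the case assumption is a property of $L_2$ alone, so the exposed $u$ is forced into $L_4$ in the first hypothesis and into $L_6$ in the second \emph{for the same reason}, keeping the two decompositions synchronized; the reassembly with $L_8:=\RewOver u K_8$ then works because $\SeqEq$ is closed under concatenation. Your version is arguably more modular and easier to verify, at the modest cost of first establishing the reversal-stability of $\SeqEq$ and the front variants of the structural lemmas; the paper's version works directly with the lemmas as stated but pays for it with the heavier bookkeeping on occurrences.
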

\end{toappendix}

\begin{toappendix}[
]

\begin{tikzpicture}[->,node distance=2.8cm, auto]

\node[]  (A) {};
\node[] (B) [above right of=A] {};
\node[] (C) [below right of=A] {};
\node[] (D) [right of=A] {};
\node[] (E) [right of=D] {};

\path (A) edge node {$L_1$} (B)
      (A) edge node {$L_2$} (C)
      (B) edge node {$L_3$} (D)
      (C) edge node {$L_4$} (D)
      (B) edge node {$L_5$} (E)
      (C) edge node {$L_6$} (E);

\end{tikzpicture}

\begin{tikzpicture}[->, node distance=2.8cm, auto]

\node (A) {};
\node (A1) [right of=A] {};
\node (A2) [right of=A1] {};
\node (B) [above right of=A] {};
\node (C) [below right of=A] {};
\node (D) [right of=B] {};
\node (E) [right of=C] {};
\node (F) [right of=A1] {};
\node (G) [right of=F] {};

\path (A) edge node {$L_1$} (B)
      (A) edge node {$L_2$} (C)
      (B) edge node {$L_7$} (D)
      (C) edge node {$L_8$} (E)
      (D) edge node {$L_3'$} (F)
      (E) edge node {$L_4'$} (F)
      (D) edge node {$L_5'$} (G)
      (E) edge node {$L_6'$} (G);

\end{tikzpicture}
\begin{proof}

We prove the result by induction on $|L_3| + |L_4| + |L_5| + |L_6|$.

Assume that $L_5$ is not empty.
Therefore, there exist $u$ and $L_7$ such that $L_5 = (L_7 \RewOver u)$. Let $n$
be the number of occurrences of $u$ in $L_1 L_5 = (L_1 L_7 \RewOver u)$. 
Since $L_2 L_6 \SeqEq L_1 L_5$, $L_2 L_6$ has $n$ occurrences of $u$ too. 

We have a case analysis according to whether the $n$-th occurrence
(from the left) of $u$ in $L_2 L_6$ is in $L_2$ or in $L_6$.

If it is in $L_6$, since $L_2 L_6 \SeqEq (L_1 L_7 \RewOver u)$,
$L_6$ is of the form $(L_8 \RewOver u L_9)$ and for every $v$ in $L_9$, $u \ValPar v$.
Therefore, $L_6 \SeqEq (L_8 L_9 \RewOver u)$, $(L_1 L_7 \RewOver u) \SeqEq (L_2 L_8 L_9 \RewOver u)$ and
$L_1 L_7 \SeqEq L_2 L_8 L_9$. 
We can use the induction hypothesis with $L_7$ instead of $L_5$ and $L_8 L_9$
instead of $L_6$ and conclude.

Now, if the $n$-th occurrence of $L_2 L_6$ is in $L_2$, we have to prove the
following intermediate result:

If $L_5 \SeqEq (L_8 \RewOver v L_9)$, this $v$ is the $m$-th occurrence of $v$ in
$(L_1 L_8 \RewOver v L_9)$,
and the $m$-th occurrence of $v$ in $L_2 L_6$ is in $L_2$, then
there exist $L_{10}$, $w$ and $k$ such that $L_5 \SeqEq (\RewOver w L_{10})$, this $w$
is the $k$-th occurrence of $w$ in $(L_1 \RewOver w L_{10})$ and the $k$-th occurrence of
$w$ in $L_2 L_6$ is in $L_2$.

This intermediate result is proved by induction on the length of $L_8$:
\begin{itemize}

\item If $L_8$ is empty we can conclude.

\item If $L_8 = (L_{10} \RewOver w)$, $w \ValPar v$ and $w \neq v$,
then $L_5 \SeqEq (L_{10} \RewOver v \RewOver w L_9)$.
By induction hypothesis with $L_{10}$ instead of $L_8$ and $(\RewOver w L_9)$ instead of
$L_9$, we can conclude.

\item If $L_8 = (L_{10} \RewOver w)$ and $w \ValPar v$ or $u = w$.
There exist $k$ such that this $w$ is the $k$-th $w$ in $(L_1 L_{10} \RewOver w \RewOver v L_9)$.
In every sequence equivalent to $(L_1 L_{10} \RewOver w \RewOver v L_9)$, the $k$-th occurrence of
$w$ is at the left of the $m$-th occurrence of $v$.
This is true, in particular, for $L_2 L_6$.
Since, the $m$-th occurrence of $v$ in $L_2 L_6$ is in $L_2$, so is the
$k$-th occurrence of $w$.
By induction hypothesis with $L_{10}$ instead of $L_8$, $w$ instead of $v$
and $(\RewOver v L_9)$ instead of $L_9$, we can conclude.

\end{itemize}

We can use the intermediate result to prove that there exist $L_8$, $v$ and $m$
such that $L_5 \SeqEq (\RewOver v L_8)$, this $v$ is the $m$-th occurrence of $v$ in
$(L_1 \RewOver v L_8)$ and the $m$-th occurrence of $v$ in $L_2 L_6$ is in $L_2$.

Therefore, $L_2$ has at least $m$ occurrences of $v$.
Hence, $L_2 L_4$ has at least $m$ occurrences of $v$.
Since, $L_2 L_4 \SeqEq L_1 L_3$, $L_1 L_3$ has at least $m$ occurrences of $v$.

If the $m$-th occurrence of $v$ in $L_1 L_3$ is in $L_1$, then $L_1$ has at
least $m$ occurrences of $v$ which is in contradiction with the fact that the
$m$-th occurrence of $v$ in $(L_1 \RewOver v L_8)$ is in not in $L_1$.

Therefore, the $m$-th occurrence of $v$ in $L_1 L_3$ must be in $L_3$.
Hence, $L_3$ is of the form $(L_9 \RewOver v L_{10})$.
\begin{itemize}

\item If, for every $w$ in $L_9$, $w \ValPar v$, then $L_3 \SeqEq (\RewOver v L_9 L_{10})$.
We can apply the induction hypothesis with $(L_1 \RewOver v)$ instead of $L_1$,
$L_8$ instead of $L_5$ and $L_9 L_{10}$ instead of $L_3$.

\item If not, there exist $w$ and $k$ such that we do not have $w \ValPar u$,
the $k$-th occurrence of $w$ in $L_1 L_3$ is in $L_3$ and at the left of the
$m$-occurrence of $v$.
Therefore, the $k$-th occurrence of $w$ in $L_2 L_4$ is at the left of the
$m$-th occurrence of $v$ which is in $L_2$.
Hence, the $k$-th occurrence of $w$ in $L_2 L_4$ is in $L_2$.
So, $L_2 L_6$ has at least $k$ occurrence of $w$ and the $k$-th occurrence of
$w$ in $L_2 L_6$ is at the left of the $m$-th occurrence of $v$.
Therefore, in $(L_1 \RewOver v L_8)$, the $k$-th occurrence of $w$ is at the left of the
$m$-th occurrence of $v$.
Hence, it is in $L_1$.
Therefore, $L_1$ has at least $k$ occurrences of $w$ which is in contradiction
with the fact that the $k$-th occurrence of $w$ in $L_1 L_3$ is in $L_3$.

\end{itemize}

Therefore, we have proved that if $L_5$ is not empty, we can conclude.
If $L_3$, $L_4$ or $L_6$ are not empty, we can do a similar proof.
If all $L_3$, $L_4$, $L_5$ and $L_6$ are empty, then the result is trivial.

Therefore, the proof of the lemma is complete.

\end{proof}
\end{toappendix}
Now, we have all the tools to define formally a new LTS reversible and causal
consistent extending the given one.
\begin{definition}[Reversible and Causal-Consistent LTS]
\label{def:Conf}
\mbox{}\\A configuration $R$ is a pair $(\EquivCl{L}, M)$ of a sequence $L$
modulo $\SeqEq$ and a term $M$, such that there exists $M'$ with
$M' L M$. We write $\Config L M$ for $(\EquivCl{L}, M)$.

The semantics of configurations is defined by the following rules:
\[\infer{\Config L M \RewOver u \Config {L \RewOver u} {M'}}{M \RewOver u M'}
\qquad
\infer{\Config {L \RewOver u} {M'} \RewOver {u^{-1}} \Config L M}{M \RewOver u M'}
\]

For a given configuration $R = \Config L M$, the unique $M'$ such that
$M' L M$ is independent from the choice of $L$ in the equivalence
class.  We call such $M'$ the \emph{initial term} of the configuration
$R$.
We also define the projection of a configuration on the last term as $\SemDeri {\Config L M} = M$.
\end{definition}
\begin{remark}
\label{rmk:Maz}
In the definition above, $\EquivCl L$ is a Mazurkiewicz Trace~\cite{Maz88}.
\end{remark}

The above definition is well posed:

\begin{lemma}
If $R$ is a configuration and $R \RewOver u R'$
(resp.\ $R \RewOver {u^{-1}} R'$) then $R'$ is a configuration.
Furthermore, $R$ and $R'$ have the same initial term.
\end{lemma}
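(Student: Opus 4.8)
The statement has two parts: (i) if $R$ is a configuration and $R \RewOver u R'$ or $R \RewOver{u^{-1}} R'$, then $R'$ is again a configuration; and (ii) in both cases $R$ and $R'$ have the same initial term. The plan is to handle the forward and backward cases separately, inspecting the two inference rules of Definition~\ref{def:Conf}, and in each case to verify the two defining conditions of a configuration --- namely that the label component is a legitimate $\SeqEq$-class and that there is some initial term computing to the last term along it.

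For the forward case, suppose $R = \Config L M$ and $R \RewOver u R'$ was derived from $M \RewOver u M'$, so $R' = \Config{L \RewOver u}{M'}$. Since $R$ is a configuration, there is $M_0$ with $M_0 L M$; appending the transition $M \RewOver u M'$ and using the second rule of Definition~\ref{def:SeqVal} gives $M_0 (L\RewOver u) M'$, so $R'$ is a configuration with initial term $M_0$. That $M_0$ is the same for $R$ follows because the initial term of $\Config L M$ is \emph{defined} to be the unique $M_0$ with $M_0 L M$ (uniqueness is item~3 of Lemma~\ref{lem:SeqVal}), and we have exhibited precisely this $M_0$ as the initial term of $R'$.

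For the backward case, suppose $R = \Config{L\RewOver u}{M'}$ and $R \RewOver{u^{-1}} R'$ was derived from $M \RewOver u M'$, so $R' = \Config L M$. Here I must first check that $\Config L M$ is well defined, i.e.\ that $L$ is a genuine sequence giving rise to a configuration. Since $R$ is a configuration, there is $M_0$ with $M_0 (L\RewOver u) M'$; by item~4 of Lemma~\ref{lem:SeqVal} (or directly by inverting the last rule of Definition~\ref{def:SeqVal}) this decomposes as $M_0 L M''$ and $M'' \RewOver u M'$ for some $M''$. By co-determinism applied to $M'' \RewOver u M'$ and $M \RewOver u M'$ we get $M'' = M$, hence $M_0 L M$, so $R' = \Config L M$ is a configuration with initial term $M_0$ --- the same $M_0$ as for $R$, again by the definitional characterisation of the initial term together with uniqueness.

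The one subtlety worth flagging --- and the only place the argument is not purely mechanical --- is that in Definition~\ref{def:Conf} the label component of a configuration is an $\SeqEq$-equivalence class $\EquivCl L$, not a fixed representative, and the transition rules are written with representatives. So strictly one should note that the construction above is independent of the chosen representative: this is exactly the content of the parenthetical remark in Definition~\ref{def:Conf} that the initial term does not depend on the choice of $L$ in its class (which in turn rests on Lemma~\ref{lem:SeqVal}(5): if $L \SeqEq L'$ and $M_0 L M$ then $M_0 L' M$). I expect this well-definedness bookkeeping, rather than the existence-of-initial-term argument, to be the main thing a careful reader will want spelled out; everything else is a direct unwinding of the definitions together with co-determinism and the properties of $M L M'$ already established in Lemma~\ref{lem:SeqVal}.
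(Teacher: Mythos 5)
Your proof is correct and takes the same route as the paper, which disposes of this lemma with the single line ``by definition of the semantics of configurations'': your argument is precisely the unfolding of that definition, using Lemma~\ref{lem:SeqVal} (items 3--5) and co-determinism exactly where they are needed. The closing remark about independence from the choice of representative of the $\SeqEq$-class is the right subtlety to flag and is likewise left implicit in the paper.
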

\begin{proof}
By definition of the semantics of configurations.
\end{proof}
The calculus formalized in Definition~\ref{def:Conf} is a conservative
extension of the original one.  Indeed its forward transitions exactly
match the transitions of the original calculus:
\begin{theorem}[Preservation of the Semantics]\mbox{}\\
\begin{itemize}
\item If $R_1 \RewOver u R_2$, then $\SemDeri {R_1} \RewOver u \SemDeri {R_2}$.

\item If $\SemDeri {R_1} \RewOver u M'$, then there exists $R_2$ such that
$\SemDeri {R_2} = M'$ and $R_1 \RewOver u R_2$.
\end{itemize}
\end{theorem}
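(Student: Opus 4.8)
The plan is to read both items directly off the definition of the semantics of configurations (Definition~\ref{def:Conf}) and of the projection $\SemDeri{\cdot}$; no induction is needed, and the only bookkeeping involved is covered by the well-posedness lemma just established.

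For the first item I would argue by inversion on the derivation of $R_1 \RewOver u R_2$. Since the label $u$ is a forward label (an element of $\ValDom$) and hence not of the shape $v^{-1}$, the only rule that can have produced this transition is the first (forward) one of Definition~\ref{def:Conf}. Thus $R_1 = \Config L M$ and $R_2 = \Config{L \RewOver u}{M'}$ for some $L$, $M$, $M'$ with premise $M \RewOver u M'$. By the definition of the projection, $\SemDeri{R_1} = M$ and $\SemDeri{R_2} = M'$, so this premise is exactly the transition $\SemDeri{R_1} \RewOver u \SemDeri{R_2}$ that is required.

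For the second item, write $R_1 = \Config L M$, so that $\SemDeri{R_1} = M$ and the hypothesis reads $M \RewOver u M'$. Applying the forward rule of Definition~\ref{def:Conf} to this transition gives $R_1 \RewOver u \Config{L \RewOver u}{M'}$, and I take $R_2 = \Config{L \RewOver u}{M'}$; this is a genuine configuration because, $R_1$ being one, there is some $N$ with $N\,L\,M$, and then $N\,(L \RewOver u)\,M'$ by Definition~\ref{def:SeqVal} (equivalently, one invokes the well-posedness lemma proved just above). Since $\SemDeri{R_2} = M'$, we are done. There is no real obstacle here: the statement merely records that the forward fragment of the reversible calculus faithfully mirrors the original LTS. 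The one point worth a remark is that the forward rule is phrased on a chosen representative $L$ of the class $\EquivCl L$; independence from this choice follows from closure of $\SeqEq$ under concatenation (Lemma~\ref{lem:SeqEq}), but it is not even needed here, as the statement only asks for the \emph{existence} of a suitable $R_2$.
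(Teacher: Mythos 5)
Your proof is correct and is exactly the argument the paper intends by its one-line justification ``by definition of the semantics of configurations'': inversion on the forward rule for the first item, and direct application of that rule (plus the well-posedness observation) for the second. Nothing further is needed.
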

\begin{proof}
By definition of the semantics of configurations.
\end{proof}
We can also show that the calculus is reversible by proving that the
Loop Lemma~\cite[Lemma 6]{rccs} holds.
\begin{theorem}[Loop Lemma]
\label{th:LoopLemma}
$R \RewOver u R'$ iff $R' \RewOver {u^{-1}} R$.
\end{theorem}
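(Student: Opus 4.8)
The plan is to prove the Loop Lemma by a direct case analysis on the two rules defining the semantics of configurations in Definition~\ref{def:Conf}, exploiting the fact that these two rules are mirror images of each other, each premised on the very same forward transition $M \RewOver u M'$ of the original LTS.

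\textbf{Forward direction.} Suppose $R \RewOver u R'$. By the first rule of Definition~\ref{def:Conf}, this transition must have been derived with $R = \Config L M$, $R' = \Config {L \RewOver u} {M'}$, and a premise $M \RewOver u M'$. But then the second rule of Definition~\ref{def:Conf}, applied to this same premise $M \RewOver u M'$, yields exactly $\Config {L \RewOver u} {M'} \RewOver {u^{-1}} \Config L M$, i.e.\ $R' \RewOver {u^{-1}} R$, as required.

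\textbf{Backward direction.} Suppose $R' \RewOver {u^{-1}} R$. A transition whose label has the form $u^{-1}$ can only have been produced by the second rule of Definition~\ref{def:Conf}, so $R'$ must be of the form $\Config {L \RewOver u} {M'}$, $R$ of the form $\Config L M$, and the rule's premise gives $M \RewOver u M'$. Applying the first rule to that premise yields $\Config L M \RewOver u \Config {L \RewOver u} {M'}$, i.e.\ $R \RewOver u R'$.

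\textbf{Anticipated subtlety.} The argument is essentially a syntactic bookkeeping exercise on the inference rules, so no real obstacle arises; the only points requiring a line of care are (i) that configurations are pairs $(\EquivCl L, M)$ with $L$ taken modulo $\SeqEq$, so one should note that the premise $M \RewOver u M'$ and the well-definedness of the rules (guaranteed by the earlier lemma and by part 5 of Lemma~\ref{lem:SeqVal}, ensuring $\Config {L \RewOver u}{M'}$ is a genuine configuration independent of the representative) make the rewriting of $\EquivCl L$ to $\EquivCl{L \RewOver u}$ and back coherent; and (ii) that labels of the forms $u$ and $u^{-1}$ are syntactically disjoint, so each direction is matched by a unique rule and the inversion steps above are justified. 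With these observations the two directions close immediately.
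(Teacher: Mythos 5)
Your proof is correct and is exactly the argument the paper intends: its own proof is the one-liner ``by definition of the semantics of configurations,'' and your case analysis on the two mirror-image rules (sharing the common premise $M \RewOver u M'$) simply spells that out, including the right remarks about well-definedness modulo $\SeqEq$ and the disjointness of the label forms $u$ and $u^{-1}$.
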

\begin{proof}
By definition of the semantics of configurations.
\end{proof}
We finally need to prove that our formalism is indeed \emph{causal
  consistent}.  A characterization of causal consistency has been
presented in~\cite[Theorem 1]{rccs}. It requires that two
coinitial computations are cofinal iff they are equal up to causal
equivalence, where causal equivalence is an equivalence relation on
computations equating computations differing only for swaps of
concurrent actions and simplifications of inverse actions
(see Theorem~\ref{th:UsualCasualC} for a precise formalization).

Before tackling this problem we study when consecutive transitions can
be swapped or simplified.

\begin{toappendix}

\appendixbeyond 0

\begin{lemma}\label{lem:Conf}\mbox{}\\

\begin{enumerate}

\item If $R_1 \RewOver u R_2$, $R_2 \RewOver v R_3$ and $u \ValPar v$, then
there exists $R_2'$ such that $R_1 \RewOver v R_2'$ and $R_2' \RewOver u R_3$.

\item If $R_1 \RewOver {u^{-1}} R_2$, $R_2 \RewOver {v^{-1}} R_3$ and
$u \ValPar v$, then
there exists $R_2'$ such that $R_1 \RewOver {v^{-1}} R_2'$ and
$R_2' \RewOver {u^{-1}} R_3$.

\item If $R_1 \RewOver u R_2$ and $R_2 \RewOver {u^{-1}} R_3$, then
$R_1 = R_3$.

\item If $R_1 \RewOver {u^{-1}} R_2$ and $R_2 \RewOver u R_3$, then $R_1 = R_3$.

\item If $R_1 \RewOver u R_2$, $R_2 \RewOver {v^{-1}} R_3$ and $u \neq v$,
then there exists $R_2'$, such that $R_1 \RewOver {v^{-1}} R_2'$,
$R_2' \RewOver u R_3$ and $u \ValPar v$.

\end{enumerate}

\end{lemma}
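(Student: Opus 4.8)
The plan is to establish the five items largely independently, handling every backward or mixed step by reducing it to the forward setting of Theory~\ref{ax:Val} via the Loop Lemma (Theorem~\ref{th:LoopLemma}) and the structural properties of $\SeqEq$ collected in Lemma~\ref{lem:SeqEq}; item~5 is the one that requires real work. For item~1 I would unfold Definition~\ref{def:Conf}: write $R_1 = \Config{L}{M_1}$, $R_2 = \Config{L\RewOver u}{M_2}$, $R_3 = \Config{L\RewOver u\RewOver v}{M_3}$ with $M_1 \RewOver u M_2 \RewOver v M_3$, apply the co-diamond property of Theory~\ref{ax:Val} to obtain $M_2'$ with $M_1 \RewOver v M_2' \RewOver u M_3$, and take $R_2' \eqdef \Config{L\RewOver v}{M_2'}$; this works because $u \ValPar v$ gives $(L\RewOver u\RewOver v) \SeqEq (L\RewOver v\RewOver u)$ directly from Definition~\ref{def:SeqEq}, hence $\EquivCl{L\RewOver v\RewOver u} = \EquivCl{L\RewOver u\RewOver v}$ and the final configuration is indeed $R_3$. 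Item~2 follows by symmetry: by the Loop Lemma the hypotheses become $R_3 \RewOver v R_2 \RewOver u R_1$ with $v \ValPar u$, item~1 produces a mid-point $R_2'$, and the Loop Lemma turns its two forward steps back into $R_1 \RewOver{v^{-1}} R_2' \RewOver{u^{-1}} R_3$.

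For items~3 and~4 I would unfold the backward rule. For item~3, write $R_1 = \Config{L}{M_1}$, $R_2 = \Config{L\RewOver u}{M_2}$ with $M_1 \RewOver u M_2$; any step $R_2 \RewOver{u^{-1}} R_3$ must come from a representative $L'\RewOver u$ of $\EquivCl{L\RewOver u}$ and a term $M'$ with $M' \RewOver u M_2$ and $R_3 = \Config{L'}{M'}$, so $L' \SeqEq L$ by Lemma~\ref{lem:SeqEq}(4) and $M' = M_1$ by co-determinism of Theory~\ref{ax:Val}, whence $R_3 = R_1$. For item~4, the Loop Lemma turns the first hypothesis into $R_2 \RewOver u R_1$; since a forward step of the configuration LTS with a fixed label is uniquely determined (by the Preservation of the Semantics theorem, determinism of Theory~\ref{ax:Val}, and the fact that $\SeqEq$ is preserved by right concatenation, Lemma~\ref{lem:SeqEq}(2)), $R_1 = R_3$.

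Item~5 is the main obstacle, being the only case that genuinely combines a forward and a backward step with distinct labels. I would write $R_1 = \Config{A}{M_1}$, $R_2 = \Config{A\RewOver u}{M_2}$ with $M_1 \RewOver u M_2$, and observe that $R_2 \RewOver{v^{-1}} R_3$ forces $(B\RewOver v) \SeqEq (A\RewOver u)$ for some $B$, with $R_3 = \Config{B}{M_0}$ and $M_0 \RewOver v M_2$. Applying Lemma~\ref{lem:SeqEq}(3) to $(B\RewOver v) \SeqEq (A\RewOver u)$ and using $u \neq v$ — so the distinguished $v$ cannot be the final label on the right — yields $A\RewOver u = C\RewOver v D'\RewOver u$ with $A = C\RewOver v D'$, $B \SeqEq C D'\RewOver u$, and every label of $D'\RewOver u$ both $\neq v$ and independent of $v$; reading off the final label $u$ of $D'\RewOver u$ gives $u \ValPar v$, one of the required conclusions. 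Next, Lemma~\ref{lem:SeqEq}(5) together with closure under concatenation (Lemma~\ref{lem:SeqEq}(2)) commutes that $v$ past $D'$, so $A \SeqEq C D'\RewOver v$; since $R_1$ is a configuration, Lemma~\ref{lem:SeqVal} supplies a term $M_1''$ with $M_1'' \RewOver v M_1$, and the backward rule then gives $R_1 \RewOver{v^{-1}} R_2'$ with $R_2' \eqdef \Config{C D'}{M_1''}$. It remains to check $R_2' \RewOver u R_3$: the co-diamond property of Theory~\ref{ax:Val} applied to $M_1'' \RewOver v M_1 \RewOver u M_2$ (with $v \ValPar u$) produces $N$ with $M_1'' \RewOver u N \RewOver v M_2$, and co-determinism against $M_0 \RewOver v M_2$ forces $N = M_0$; hence $R_2' \RewOver u \Config{C D'\RewOver u}{M_0} = \Config{B}{M_0} = R_3$, using $B \SeqEq C D'\RewOver u$. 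The delicate points throughout item~5 are the occurrence-tracking bookkeeping needed to invoke Lemma~\ref{lem:SeqEq}(3) and~(5) and the verification that the equivalence classes on the two sides genuinely coincide; everything else is a direct appeal to Theory~\ref{ax:Val} and the Loop Lemma.
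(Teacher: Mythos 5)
Your proposal is correct and follows essentially the same route as the paper's proof: unfold configurations, use the co-diamond property and closure of $\SeqEq$ for items 1--2, co-determinism/determinism plus Lemma~\ref{lem:SeqEq}(4) for items 3--4, and for item 5 decompose the trace with Lemma~\ref{lem:SeqEq}(3), extract $u \ValPar v$ from the nonempty tail forced by $u \neq v$, and close the square with the co-diamond property and co-determinism. The only (cosmetic) differences are that you apply Lemma~\ref{lem:SeqEq}(3) with $v$ as the distinguished label where the paper uses $u$, and you derive item 4 from the Loop Lemma plus determinism of the configuration LTS rather than by direct unfolding.
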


\end{toappendix}

\begin{toappendix}[]

\begin{proof}
\mbox{}\\
\begin{enumerate}

\item Straightforward from the fact that sequences in configurations are
considered up to $\SeqEq$.

\item Corollary of the Loop lemma and the previous item.

\item There exist $L_1$, $L_2$, $L_3$, $M_1$, $M_2$ and $M_3$ such that
$R_1 = \Config {L_1} {M_1} $ and $R_2 = \Config {L_2} M_2$,
$R_3 = \Config {L_3} {M_3}$.
Therefore, $M_1 \RewOver u M_2$, $(L_1 \RewOver u) \SeqEq L_2$,
$M_3 \RewOver u M_2$ and $L_2 \SeqEq (L_3 \RewOver u)$.
Hence, $(L_1 \RewOver u) \SeqEq (L_3 \RewOver u)$.
Then, $M_1 = M_3$ and, by Lemma~\ref{lem:SeqEq}, item 4, $L_1 \SeqEq L_3$.
Therefore, $R_1 = R_3$.

\item There exist $L_1$, $L_2$, $L_3$, $M_1$, $M_2$ and $M_3$ such that
$R_1 = \Config {L_1} {M_1} $ and $R_2 = \Config {L_2} M_2$,
$R_3 = \Config {L_3} {M_3}$.
Therefore, $M_2 \RewOver u M_1$, $L_1 \SeqEq (L_2 \RewOver u)$,
$M_2 \RewOver u M_3$ and $(L_2 \RewOver u) \SeqEq L_3$.
Hence, $M_1 = M_3$ and $L_1 \SeqEq L_3$.
Therefore, $R_1 = R_3$.

\item There exist $L_1$, $L_2$, $L_3$, $M_1$, $M_2$ and $M_3$ such that
$R_1 = \Config {L_1} {M_1} $ and $R_2 = \Config {L_2} M_2$,
$R_3 = \Config {L_3} {M_3}$.
Therefore, $M_1 \RewOver u M_2$, $(L_1 \RewOver u) \SeqEq L_2$,
$M_3 \RewOver v M_2$ and $L_2 \SeqEq (L_3 \RewOver v)$.
Hence, $(L_1 \RewOver u) \SeqEq (L_3 \RewOver v)$.
By Lemma~\ref{lem:SeqEq}, item 3, there exist $L_4$ and $L_5$ such that
$(L_3 \RewOver v) = (L_4 \RewOver u L_5)$, $L_1 \SeqEq L_4 L_5$ and for
all $w  \in L_5$, $u \neq w$ and $u \ValPar w$.
If $L_5$ is empty, then $(L_3 \RewOver v) = (L_4 \RewOver u)$, and $u = v$.
This is a contradiction.
Therefore, $L_5$ is not empty and there exist $L_6$ and $w$ such that
$L_5 = (L_6 \RewOver w)$.
Hence, $(L_3 \RewOver v) = (L_4 \RewOver u L_6 \RewOver w)$.
Therefore, $w = v$, $L_3 = (L_4 \RewOver u L_6)$.
Moreover, $L_1 \SeqEq (L_4 L_5) = (L_4 L_6 \RewOver v)$ and
$R_1 = \Config {L_1} {M_1} = \Config {L_4 L_6 \RewOver v} {M_1}$ is a
configuration.
Hence, there exists $M_0$ such that $M_0 (L_4 L_6 \RewOver v) M_1$.
Therefore, there exists $M_2'$ such that $M_0 (L_4 L_6) M_2'$ and
$M_2' \RewOver v M_1$.
Hence, we have $\Config {L_4 L_6 \RewOver v} {M_1} \RewOver {v^{-1}}
\Config {L_4 L_6} {M_2'}$.
Let $R_2' \eqdef \Config {L_4 L_6} {M_2'}$.
Therefore, $R_1 \RewOver {v^{-1}} R_2'$.

Moreover, by the fact that $M_2' \RewOver v M_1$, $M_1 \RewOver u M_2$ and
$u \ValPar v$ (because $v \in L_5 = (L_6 \RewOver v)$), there exists $M_1'$ such
that $M_2' \RewOver u M_1'$ and $M_1' \RewOver v M_2$.
By the fact that we also have $M_3 \RewOver v M_2$, we have $M_1' = M_3$.
Hence, $M_2' \RewOver u M_3$.
Therefore, $\Config {L_4 L_6} {M_2'} \RewOver u
\Config {L_4 L_6 \RewOver u} M_3$.
For all $w \in L_6$, $w \in L_5 = (L_6 \RewOver v)$, and so $u \ValPar w$.
By Lemma~\ref{lem:SeqVal}, item 5, $(L_6 \RewOver u) \SeqEq (\RewOver u L_6)$.
Hence, $\Config {L_4 L_6 \RewOver u} {M_3} =
\Config {L_4 \RewOver u L_6} {M_3} =
\Config {L_3} {M_3} = R_3$.
Therefore, $R_2' \RewOver u R_3$.

\end{enumerate}

\end{proof}

\end{toappendix}

Given the previous result, we can define
(Definition~\ref{def:SeqConf}) a formal way to rearrange (like in the
original calculus) and simplify a sequence of transitions. Note that
each rule defining the transformation operator $\RewList$ (but for
reflexivity) is justified by an item of Lemma~\ref{lem:Conf}.  Since
some of these transformations are asymmetric, e.g., the simplification
of a step with its inverse, the resulting formal system is not an
equivalence relation but a partial pre-order.  For example, the
sequence $\RewOver u \RewOver {u^{-1}}$ can be transformed into
$\SeqEmpty$ but not the other way around.  The reason is that if $M
\RewOver u \RewOver {u^{-1}} M'$, then $M \SeqEmpty M'$.  However, we
may have $M \SeqEmpty M$ without necessarily having $M \RewOver u \RewOver
{u^{-1}} M'$ (in particular, if $M$ cannot perform $u$).

\begin{definition}
\label{def:SeqConf}
We write $\ValEx$ for the set of $\gamma$ of the form $u$ or $u^{-1}$.
Also, we define $(u^{-1})^{-1} = u$.
A sequence $\gL$ of elements $\gamma_1$, \ldots $\gamma_n$ is written
$\RewOver {\gamma_1} \ldots \RewOver {\gamma_n}$.
Also, we write $\gL^{-1}$ for $\RewOver {\gamma_n^{-1}} \ldots
\RewOver {\gamma_1^{-1}}$.

The judgment $R \gL R'$ is defined by the following rules:
\[\infer{R \SeqEmpty R}{} \qquad
\infer{R_1 (\gL \RewOver \gamma) R_3}{R_1 \gL R_2 \quad R_2 \RewOver \gamma R_3}
\]
The judgment $\gL \RewList \gL'$ is defined by the following rules:
\[\begin{array}{c}

\infer{\gL \RewList \gL}{} \qquad

\infer{\gL_1 \RewList (\gL_2 \RewOver v \RewOver u \gL_3)}
{\gL_1 \RewList (\gL_2 \RewOver u \RewOver v \gL_3) \quad u \ValPar v} \qquad

\infer{\gL_1 \RewList (\gL_2 \RewOver {v^{-1}} \RewOver {u^{-1}} \gL_3)}
{\gL_1 \RewList (\gL_2 \RewOver {u^{-1}} \RewOver {v^{-1}} \gL_3) \quad
u \ValPar v} \\

\infer{\gL_1 \RewList (\gL_2 \gL_3)}
{\gL_1 \RewList (\gL_2 \RewOver u \RewOver {u^{-1}} \gL_3)} \qquad

\infer{\gL_1 \RewList (\gL_2 \gL_3)}
{\gL_1 \RewList (\gL_2 \RewOver {u^{-1}} \RewOver u \gL_3)} \qquad

\infer{\gL_1 \RewList (\gL_2 \RewOver {v^{-1}} \RewOver u \gL_3)}
{\gL_1 \RewList (\gL_2 \RewOver u \RewOver {v^{-1}} \gL_3) \quad u \neq v}
\end{array}
\]
\end{definition}
We can now prove some properties of the judgments above.
\begin{toappendix}

\appendixbeyond 0
\begin{lemma}\label{lem:SeqConf}\mbox{}\\
\begin{enumerate}

\item The notation $R \gL R'$ is a conservative extension of the notation
$R \RewOver \gamma R'$.

\item If $R \gL_1 R'$ and $R' \gL_2 R''$, then $R (\gL_1 \gL_2) R''$.

\item If $R (\gL_1 \gL_2) R'$, then there exists $R''$ such that
$R \gL_1 R''$ and $R'' \gL_2 R'$.

\item $\RewList$ is a partial pre-order.

\item If $\gL_1 \RewList \gL_2$ and $R \gL_1 R'$, then $R \gL_2 R'$.

\item If $L_1 \SeqEq L_2$ then $L_1 \RewList L_2$ and
$L_1^{-1} \RewList L_2^{-1}$.
\end{enumerate}
\end{lemma}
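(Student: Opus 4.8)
The plan is to handle the six items in order, leaning on Lemmas~\ref{lem:SeqEq} and~\ref{lem:SeqVal} for the ``sequence'' book-keeping and on Lemma~\ref{lem:Conf} for the only substantive one.

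Items~1--3 concern the big-step judgment $R \gL R'$ and are proved exactly as the corresponding items of Lemma~\ref{lem:SeqVal}. For item~1, a derivation of $R (\RewOver\gamma) R'$ can only end with the second rule instantiated at the empty sequence, whose left premise $R \SeqEmpty R''$ forces $R'' = R$; conversely that same rule turns $R \RewOver\gamma R'$ into $R (\RewOver\gamma) R'$. Item~2 is by induction on the derivation of $R' \gL_2 R''$: the case $\gL_2 = \SeqEmpty$ is immediate, and the case $\gL_2 = \gL_2' \RewOver\gamma$ follows from the induction hypothesis plus one application of the second rule. Item~3 is by induction on $\gL_2$: if $\gL_2 = \SeqEmpty$ take $R'' = R'$, and if $\gL_2 = \gL_2' \RewOver\gamma$ then $R (\gL_1 \gL_2') R_2$ and $R_2 \RewOver\gamma R'$ for some $R_2$, so the induction hypothesis yields $R''$ with $R \gL_1 R''$ and $R'' \gL_2' R_2$, hence $R'' \gL_2 R'$.

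For item~4, reflexivity of $\RewList$ is its first rule, and transitivity is proved by induction on the derivation of $\gL_2 \RewList \gL_3$: if it is reflexivity then $\gL_3 = \gL_2$; otherwise the last rule has a premise $\gL_2 \RewList Z$ with $\gL_3$ obtained from $Z$ by one local rewrite, the induction hypothesis gives $\gL_1 \RewList Z$, and re-applying the same rule gives $\gL_1 \RewList \gL_3$ (no symmetry being claimed, consistently with the discussion preceding Definition~\ref{def:SeqConf}). Item~6 is by induction on the derivation of $L_1 \SeqEq L_2$: the reflexive case is trivial, and for a permutation step $L_1 \SeqEq (L_a \RewOver v \RewOver u L_b)$ coming from $L_1 \SeqEq (L_a \RewOver u \RewOver v L_b)$ with $u \ValPar v$, the induction hypothesis gives $L_1 \RewList (L_a \RewOver u \RewOver v L_b)$ and the second rule of $\RewList$ produces $L_1 \RewList (L_a \RewOver v \RewOver u L_b)$; for the inverses, the induction hypothesis gives $L_1^{-1} \RewList (L_b^{-1} \RewOver{v^{-1}} \RewOver{u^{-1}} L_a^{-1})$, and since $\ValPar$ is symmetric the third rule of $\RewList$ rewrites the central pair to yield $L_1^{-1} \RewList (L_b^{-1} \RewOver{u^{-1}} \RewOver{v^{-1}} L_a^{-1})$, which is exactly $(L_a \RewOver v \RewOver u L_b)^{-1}$.

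The only item requiring real work is item~5, proved by induction on the derivation of $\gL_1 \RewList \gL_2$. If the derivation is reflexivity there is nothing to prove. Otherwise its last rule rewrites some $Z$ --- with $\gL_1 \RewList Z$ derivable, hence $R Z R'$ by the induction hypothesis applied to the given $R \gL_1 R'$ --- by one of the five local rules; each such rule leaves a fixed prefix $\gL_a$ and a fixed suffix $\gL_c$ untouched and replaces a two-element window $W$ between them (either by a permuted two-element window, or, in the two cancellation rules, by $\SeqEmpty$). Using item~3 twice I split $R Z R'$ as $R \gL_a R_a$, $R_a W R_b$, $R_b \gL_c R'$, transform the middle factor by the matching item of Lemma~\ref{lem:Conf} --- item~1 for the $\RewOver u \RewOver v \to \RewOver v \RewOver u$ swap, item~2 for the $\RewOver{u^{-1}} \RewOver{v^{-1}} \to \RewOver{v^{-1}} \RewOver{u^{-1}}$ swap, items~3 and~4 for the cancellations $\RewOver u \RewOver{u^{-1}}$ and $\RewOver{u^{-1}} \RewOver u$, and item~5 for the $\RewOver u \RewOver{v^{-1}} \to \RewOver{v^{-1}} \RewOver u$ swap --- and recombine with item~2 of the present lemma. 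The main (and essentially only) obstacle is tracking the side conditions $u \ValPar v$ and $u \neq v$ so that the intended item of Lemma~\ref{lem:Conf} applies; since no case goes deeper than a single use of that lemma, I expect no genuine difficulty beyond this routine case analysis.
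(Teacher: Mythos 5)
Your proposal is correct and follows exactly the route the paper intends: its own proof is the one-line remark that the lemma ``follows from Lemma~\ref{lem:Conf} with the structure of proofs similar to the ones found in Lemma~\ref{lem:SeqEq} and Lemma~\ref{lem:SeqVal}'', and your write-up is precisely that plan carried out in detail (items 1--4 and 6 by routine induction on derivations, item 5 by splitting off the rewritten two-element window with item 3, transforming it via the matching item of Lemma~\ref{lem:Conf}, and recombining with item 2).
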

\end{toappendix}
\begin{toappendix}[]
\begin{proof}
Follows from Lemma~\ref{lem:Conf} with the structure of proofs similar to
the ones found in Lemma~\ref{lem:SeqEq} and Lemma~\ref{lem:SeqVal}.
\end{proof}
\end{toappendix}
Using the transformations above, we can transform any transition
sequence into the form $L_1^{-1}L_2$ where $L_1$ and $L_2$ are
sequences of forward transitions: $L_1^{-1}L_2$ is composed by a
sequence of backward steps followed by a sequence of forward steps
(Theorem~\ref{th:RewConfSeq}). Intuitively, this means that any
configuration can be reached by first going to the beginning of the
computation and then going only forward. This result corresponds to
the one in~\cite[Lemma 10]{rccs}, which is sometimes called the
Parabolic lemma.  In addition, we show that if two computations are
coinitial and cofinal, then they have a common form $L_1^{-1}L_2$. As
we will see below, this is related to causal consistency \cite[Theorem
  1]{rccs}.

\begin{theorem}[Asymmetrical Causal Consistency]
\label{th:RewConfSeq}
If $R \gL_1 R'$ and $R \gL_2 R'$, then there exist $L_1$ and $L_2$
such that $\gL_1 \RewList L_1^{-1} L_2$ and $\gL_2 \RewList L_1^{-1} L_2$.
%
\end{theorem}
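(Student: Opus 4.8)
The plan is to reduce the statement to a normal-form result together with the confluence-type Lemma~\ref{lem:RewSeq}. First I would establish, as a preliminary step, the \emph{parabolic normal form}: for any $R \gL R'$ there exist forward sequences $L_1$, $L_2$ such that $R \gL R'$ implies $\gL \RewList L_1^{-1} L_2$ and moreover $R (L_1^{-1} L_2) R'$. This is proved by induction on the length of $\gL$, pushing backward steps to the left and forward steps to the right: whenever the sequence contains an adjacent pair $\RewOver u \RewOver{v^{-1}}$, either $u = v$ and we delete it using the simplification rule of $\RewList$ (justified by Lemma~\ref{lem:Conf}, item~3), or $u \neq v$ and we swap it to $\RewOver{v^{-1}} \RewOver u$ using the last rule of Definition~\ref{def:SeqConf} (justified by Lemma~\ref{lem:Conf}, item~5). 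Each such operation strictly decreases a suitable measure (number of forward steps occurring before some backward step), so the process terminates in a sequence of shape $L_1^{-1} L_2$ with $L_1, L_2$ forward. That $R (L_1^{-1} L_2) R'$ still holds follows from Lemma~\ref{lem:SeqConf}, item~5.

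Next I would read off what $R (L_1^{-1} L_2) R'$ means concretely. Writing $R = \Config {K} {M}$ and $R' = \Config {K'} {M'}$, the backward part $L_1^{-1}$ forces $K \SeqEq (K_0 L_1)$ for the initial term's sequence $K_0$, after which the forward part $L_2$ gives $K' \SeqEq (K_0 L_2)$; here $K_0$ is the sequence from the (common) initial term $N$ to the intermediate term reached after the backward steps. So from $R \gL_1 R'$ and $R \gL_2 R'$ and the normal forms $\gL_1 \RewList L_1^{-1} L_2$, $\gL_2 \RewList L_3^{-1} L_4$, we get $K \SeqEq (A L_1) \SeqEq (B L_3)$ and $K' \SeqEq (A L_2) \SeqEq (B L_4)$ for the respective intermediate sequences $A$, $B$ (both starting from the initial term $N$ of $R$, which equals that of $R'$ by the well-posedness lemma). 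This is exactly the hypothesis of Lemma~\ref{lem:RewSeq} with $(L_1, L_2)$ there instantiated by $(A, B)$, $(L_3, L_4)$ by $(L_1, L_3)$, and $(L_5, L_6)$ by $(L_2, L_4)$.

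Applying Lemma~\ref{lem:RewSeq} yields sequences $L_7, L_8, L_1', L_3', L_2', L_4'$ with $L_1 \SeqEq (L_7 L_1')$, $L_3 \SeqEq (L_8 L_3')$, $L_2 \SeqEq (L_7 L_2')$, $L_4 \SeqEq (L_8 L_4')$, $L_1' \SeqEq L_3'$ and $L_2' \SeqEq L_4'$. Then set $\mathcal{L}_1 \eqdef (L_7^{-1} \cdot (L_1')^{-1}$-free-part$)$\dots more precisely, take $\widehat{L_1} \eqdef L_1'$ (equivalently $L_3'$) and $\widehat{L_2} \eqdef L_2'$ (equivalently $L_4'$), and claim $\gL_1 \RewList \widehat{L_1}^{-1}\, \widehat{L_2}$ and likewise for $\gL_2$. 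Indeed $\gL_1 \RewList L_1^{-1} L_2 \RewList (L_7 L_1')^{-1} (L_7 L_2') = (L_1')^{-1} L_7^{-1} L_7 L_2'$, and repeated application of the $\RewOver{u^{-1}}\RewOver u$ simplification rule collapses $L_7^{-1} L_7$, giving $\gL_1 \RewList (L_1')^{-1} L_2'$; the use of $\SeqEq$-steps inside $\RewList$ is legitimate by Lemma~\ref{lem:SeqConf}, item~6. Symmetrically $\gL_2 \RewList L_3^{-1} L_4 \RewList (L_3')^{-1} L_4' \RewList (L_1')^{-1} L_2'$, again using $L_1' \SeqEq L_3'$, $L_2' \SeqEq L_4'$ and item~6. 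So $\widehat{L_1}$, $\widehat{L_2}$ witness the theorem.

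The main obstacle is Lemma~\ref{lem:RewSeq} itself, which is the genuine combinatorial core (its proof is the long induction on total length with the nested occurrence-tracking argument); granting it, the remaining work is the parabolic normalization and bookkeeping with $\RewList$, which is routine given Lemmas~\ref{lem:Conf}, \ref{lem:SeqConf} and the closure of $\RewList$ under $\SeqEq$. A secondary subtlety is making sure the $L_7^{-1} L_7$ cancellation is actually derivable step-by-step in $\RewList$ rather than only "morally" true: one peels off matching innermost pairs $\RewOver{w^{-1}}\RewOver{w}$ one at a time, which is exactly the shape of the fourth rule in Definition~\ref{def:SeqConf}, so this goes through by an easy induction on $|L_7|$.
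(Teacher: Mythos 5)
Your proposal is correct and follows essentially the same route as the paper's own proof: normalize each $\gL_i$ to parabolic form $L^{-1}L'$ via the simplification and swap rules of $\RewList$, translate $R\,(L^{-1}L')\,R'$ into $\SeqEq$-constraints on the histories so that Lemma~\ref{lem:RewSeq} applies, and then cancel the common prefix $L_7^{-1}L_7$ using Lemma~\ref{lem:SeqConf}. The only difference is that you spell out termination of the normalization and the step-by-step derivability of the cancellation, which the paper leaves implicit.
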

\begin{proof}
Suppose we have $R = \Config L M$ and $R' = \Config {L'} {M'}$.
By simplifying the occurrences of $\RewOver u \RewOver {u^{-1}}$
and by replacing occurrences of $\RewOver u \RewOver {v^{-1}}$ by
$\RewOver {v^{-1}} \RewOver u$ when $u \neq v$ in $\gL_1$, we can prove that
there exist $L_1$ and $L_2$ such that $\gL_1 \RewList L_1^{-1} L_2$.
Therefore, there exist a configuration $R_1 = \Config {L_3} {M_1}$ such that
$R L_1^{-1} R_1$ and $R_1 L_2 R'$.
Hence, $R_1 L_1 R$.
So, $L_3 L_1 \SeqEq L$ and $L_3 L_2 \SeqEq L'$.
Similarly, we can prove that there exist $L_4$, $L_5$ and $L_6$ such that
$\gL_2 \RewList L_4^{-1} L_5$, $L_6 L_4 \SeqEq L$ and $L_6 L_5 \SeqEq L'$.
Therefore, $L_3 L_1 \SeqEq L_6 L_4$ and $L_3 L_2 \SeqEq L_6 L_5$.
By Lemma~\ref{lem:RewSeq}, there exist $L_7$, $L_8$, $L_1'$, $L_4'$, $L_2'$,
$L_5'$ such that
$L_1 \SeqEq L_7 L_1'$, $L_4 \SeqEq L_8 L_4'$, $L_2 \SeqEq L_7 L_2'$,
$L_5 \SeqEq L_8 L_5'$, $L_1' \SeqEq L_4'$ and $L_2' \SeqEq L_5'$.
Hence, $\gL_1 \RewList 
L_1^{-1} L_2 \RewList L_1'^{-1} L_7^{-1} L_7 L_2' \RewList L_1'^{-1} L_2'$.
Similarly, $\gL_2 \RewList L_4'^{-1} L_5'$.
We also have $L_1'^{-1} L_2' \SeqEq L_4'^{-1} L_5'$.
Therefore, $\gL_1 \RewList L_1'^{-1} L_2'$ and $\gL_2 \RewList L_1'^{-1} L_2'$.
\end{proof}
\begin{remark}
Our Theorem~\ref{th:RewConfSeq} could be stated with the terminology in
\cite{rccs} as follows:

If $s_1$ and $s_2$ are coinitial and cofinal computations, then
there exists $s_3$ which is a simplification of both $s_1$ and $s_2$.

We will show below that this is stronger than the implication "if two
computations are coinitial and cofinal then they are causal
equivalent" in the statement of causal consistency \cite[Theorem
  1]{rccs}.  Moreover, the (easier) implication "if two computations
are causal equivalent than they are coinitial and cofinal" of
\cite[Theorem 1]{rccs} is also true by construction in our framework.
\end{remark}
%

In order to define causal equivalence we need to restrict to valid
reduction sequences. This is needed since otherwise the
transformations defining causal equivalence, differently from the ones
defining $\RewList$, may not preserve validity of reduction sequences.
The usual definition of a valid reduction sequence of length $n$ is described
by $(n + 1)$ configurations $(R_i)_{0\leq i \leq n}$ and $n$ labels
$(\gamma_i)_{1 \leq i \leq n}$ such that:
\[R_0 \RewOver {\gamma_1} \ldots \RewOver {\gamma_n} R_n\]
However, by determinism and co-determinism, we can retrieve $R_0$, \ldots $R_{n-1}$ from
$\gamma_1$, \ldots $\gamma_n$ and $R_n$.
Therefore, we will use the following equivalent definition:
\begin{definition}[Valid Sequences]
A valid sequence $s$ is an ordered pair $(\gL, R)$ such that there exists
$R'$ with $R' \gL R$.
Then, $R'$ is unique, $R'$ and $R$ are called the initial and
final configuration of $s$, and we write $R' s R$.
We write $E$ for the set of valid sequences.
\end{definition}
Following \cite{rccs}, we can define \emph{causal equivalence}
$\thicksim$ as follows: if $s_2$ is a rewriting of $s_1$ (valid
permutation, or valid simplification), then $s_1 \thicksim s_2$.
More formally:
\begin{definition}[Causal Equivalence]
Causal equivalence $\thicksim$ is the least equivalence relation on
$E$ closed under composition satisfying the following equivalences
(provided both the terms are valid):
\begin{itemize}
\item we can swap independent actions: if
$u \ValPar v$ then $\RewOver u \RewOver v\thicksim \RewOver v \RewOver u$, $\RewOver {u^{-1}} \RewOver{v^{-1}} \thicksim \RewOver{v^{-1}}\RewOver{u^{-1}}$ and $\RewOver u \RewOver {v^{-1}} \thicksim \RewOver {v^{-1}} \RewOver u$;  
\item we can simplify inverse actions: $\RewOver u \RewOver {u^{-1}} \thicksim \SeqEmpty$ and $\RewOver {u^{-1}} \RewOver u \thicksim \SeqEmpty$.
\end{itemize}
\end{definition}
\begin{theorem}[Causal Consistency]
\label{th:UsualCasualC}
Assume that $s_1$ and $s_2$ are valid sequences.
Then we have $s_1 \thicksim s_2$ if and
only if $s_1$ and $s_2$ are coinitial and cofinal.
\end{theorem}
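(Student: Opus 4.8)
The theorem claims $s_1 \thicksim s_2$ iff $s_1,s_2$ are coinitial and cofinal. The forward direction ($\Rightarrow$) is the easy one: causal equivalence is generated by valid swaps and simplifications, each of which manifestly preserves both the initial and the final configuration, and these properties are preserved under composition; so I would dispatch it by a one-line induction on the derivation of $s_1 \thicksim s_2$, invoking the well-posedness lemma (and the Loop Lemma for the simplification clauses). The real content is the converse ($\Leftarrow$): coinitial and cofinal implies causally equivalent.

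\textbf{The converse direction.} The key idea is to leverage Theorem~\ref{th:RewConfSeq} (Asymmetrical Causal Consistency). Writing $s_1 = (\gL_1, R')$ and $s_2 = (\gL_2, R')$ with common initial configuration $R$, that theorem gives us $L_1, L_2$ (sequences of \emph{forward} labels) with $\gL_1 \RewList L_1^{-1} L_2$ and $\gL_2 \RewList L_1^{-1} L_2$. So it suffices to bridge the gap between $\RewList$ and $\thicksim$: I would prove that whenever $\gL \RewList \gL'$ with both sequences valid on the same initial configuration, we have $\gL \thicksim \gL'$. This is because every generating rule of $\RewList$ in Definition~\ref{def:SeqConf} is an instance of one of the defining equivalences of $\thicksim$ \emph{read in the permitted direction}: the two permutation-of-forward and permutation-of-backward rules and the $\RewOver u \RewOver{v^{-1}} \to \RewOver{v^{-1}}\RewOver u$ rule are literally swap-clauses of $\thicksim$ (which is symmetric, so the asymmetry of $\RewList$ is harmless), and the two $\RewOver u \RewOver{u^{-1}} \to \SeqEmpty$, $\RewOver{u^{-1}}\RewOver u \to \SeqEmpty$ rules are exactly the simplification clauses; closure under context and composition then gives the reduction. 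Once this bridging lemma is in hand, from $\gL_1 \RewList L_1^{-1} L_2$ and $\gL_2 \RewList L_1^{-1} L_2$ we get $s_1 \thicksim (L_1^{-1} L_2, R') \thicksim s_2$ by transitivity, provided the intermediate sequence $(L_1^{-1} L_2, R')$ is itself valid — which it is, since $R' L_1^{-1} L_2 \,?\,$ follows from the proof of Theorem~\ref{th:RewConfSeq}, where $L_1^{-1} L_2$ was exhibited as a valid rewriting of $\gL_1$ on $R$.

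\textbf{The main obstacle.} The subtle point — and the reason the authors insisted on restricting causal equivalence to \emph{valid} sequences — is validity bookkeeping along the bridging argument: a rewriting step of $\RewList$ may pass through intermediate sequences that would fail to be valid reduction sequences if applied in the "wrong" order, so one cannot naively lift an arbitrary $\RewList$-derivation clause-by-clause to a $\thicksim$-derivation. The careful move is to apply the bridging lemma only to the specific $\RewList$-reductions produced by Theorem~\ref{th:RewConfSeq}, tracking at each stage that the sequence in play is valid on $R$ — which the construction there guarantees because each rearrangement corresponds, via Lemma~\ref{lem:Conf} and Lemma~\ref{lem:SeqConf}(5), to an actual configuration-level transition sequence. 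So the proof structure is: (i) $\Rightarrow$ by induction on $\thicksim$; (ii) bridging lemma "$\gL \RewList \gL'$ with both valid on $R$ implies $\gL \thicksim \gL'$" by induction on the $\RewList$-derivation, checking validity is maintained; (iii) $\Leftarrow$ by feeding Theorem~\ref{th:RewConfSeq} into (ii) and using transitivity of $\thicksim$.
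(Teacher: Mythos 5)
Your proposal is correct and follows essentially the same route as the paper: the forward direction by observing each generating clause of $\thicksim$ preserves endpoints, and the converse by feeding Theorem~\ref{th:RewConfSeq} into the bridging observation that $\gL \RewList \gL'$ with $\gL$ valid yields $\gL \thicksim \gL'$ (the paper obtains this directly from Lemma~\ref{lem:SeqConf}, item 5, which already guarantees validity is preserved along every $\RewList$ step, so the validity bookkeeping you worry about is handled once and for all by that lemma). The only cosmetic difference is that you name the intermediate sequence $L_1^{-1}L_2$ explicitly, whereas the paper just uses an abstract common $\gL_3$.
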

\begin{proof}\mbox{}\\
\begin{itemize}

\item If $s_1 \thicksim s_2$, by construction each step in the derivation of
$s_1 \thicksim s_2$ does not change the initial and final terms.
Therefore, $s_1$ and $s_2$ are coinitial and cofinal.

\item

Assume that $s_1$ and $s_2$ are coinitial and cofinal.
We want to prove that $s_1 \thicksim s_2$.

First, by using Lemma~\ref{lem:SeqConf} item 5,
for every $s = (\gL, R) \in E$ and $\gL'$,
if $\gL \RewList \gL'$ then we have $s' \in E$
and $s \thicksim s'$, where $s'$ stands for $(\gL', R)$.

By hypothesis, there exist $R$, $R'$, $\gL_1$ and $\gL_2$ such that
$s_1 = (\gL_1, R)$, $s_2 = (\gL_2, R)$, $R' \gL_1 R$ and $R' \gL_2 R$.

By Theorem~\ref{th:RewConfSeq}, there exists $\gL_3$ such that
$\gL_1 \RewList \gL_3$ and $\gL_2 \RewList \gL_3$.

Therefore, if we write $s_3$ for $(\gL_3, R)$, then $s_3 \in E$,
$s_1 \thicksim s_3$ and $s_2 \thicksim s_3$.
Hence, by the fact that $\thicksim$ is an equivalence relation, we have
$s_1 \thicksim s_2$.
\end{itemize}
\end{proof}

\section{Making CCS Reversible}
\label{sec:RCCS}
In this section we give a refinement of CCS with recursion so that we can apply
the framework described in Section~\ref{sec:Formal}. See \cite{CCS} for
details of CCS.

Assume that we have a set of channels $a$ and a set of process variables $X$.
In our refinement, as specified in Section~\ref{sec:Formal}, terms are
defined by the same grammar used in (standard) CCS:
\[\begin{array}{lll}
P, Q & \recdef & \underset{i \in I}{\Sigma} \alpha_i . P_i ~ \mid ~
               (P | Q) ~ \mid ~
               \nu a . P ~ \mid ~
                0 ~ \mid ~
                X ~ \mid ~
                \TRec X P \\
\alpha & \recdef & a ~ \mid ~ \overline a 
\end{array}
\]
Action $a$ denotes an input on channel $a$, while $\overline a$ is
the corresponding output. Nondeterministic choice $\underset{i \in
  I}{\Sigma} \alpha_i . P_i$ can perform any action $\alpha_i$ and
continue as $P_i$. $P|Q$ is parallel composition. Process $\nu a . P$
denotes that channel $a$ is local to $P$. $0$ is the process that does
nothing. Construct $\TRec X P$ allows the definition of recursive
processes.  Transitions in CCS are of the form $P
\RewOverOpt{ccs}{\alpha} P'$ where $\alpha$ is $a$, $\overline a$
or $\tau$ (internal synchronization).

The following grammar defines labels for our refinement:
\[u, v \recdef \LabelSend {(\alpha_j, P_j)_{j \in I}} i ~ \mid ~
              (u | \bullet) ~ \mid ~ (\bullet | u) ~ \mid ~ (u | v) ~ \mid ~
              \nu a . u ~ \mid ~ \TRec X P
\]
We consider terms and labels up to $\alpha$-equivalence (of
both variables $X$ and channels $a$).
Therefore, we can define the substitution $\subst P X Q$ avoiding variable
capture.
We define the interpretation of labels as follows (partial function):
\[\begin{array}{llllllll}
\SemLabel {\LabelSend {(\alpha_j, P_j)_{j \in I}} i} & \eqdef & \alpha_i & \qquad
\SemLabel {\TRec X P} & \eqdef & \tau\\

\SemLabel {(u | \bullet)} & \eqdef & \SemLabel u & \qquad

\SemLabel {(\bullet | u)} & \eqdef & \SemLabel u \\

\SemLabel {(u | v)} & \eqdef & \tau & \qquad

\SemLabel {\nu a . u} & \eqdef & \SemLabel u \quad
(\SemLabel u \notin \{a, \overline a\})
\end{array}
\]
We define transitions with the rules in Table~\ref{fig:RCCS}.
\begin{table}[t!]
\[\begin{array}{|c|}
\hline\\
\infer{\underset{j \in I}{\Sigma} \alpha_j . P_j \RewOver 
{\LabelSend {(\alpha_j, P_j)_{j \in I}} i} P_i}{i \in I} \qquad

\infer{(P | Q) \RewOver {(u | \bullet)} (P' | Q)}{P \RewOver u P'} \\[3mm]

\infer{(P | Q) \RewOver {(\bullet | u)} (P | Q')}{Q \RewOver u Q'} \qquad

\infer{\nu a . P \RewOver {\nu a . u} \nu a . P'}
{P \RewOver u P' \quad \SemLabel u \notin \{a, \overline{a}\}}\\[6mm]

\infer{(P | Q) \RewOver {(u | v)} (P' | Q')}
{P \RewOver u P' \quad Q \RewOver v Q' \quad
(\SemLabel u = \alpha \wedge \SemLabel v = \overline{\alpha}) \vee
(\SemLabel u = \overline{\alpha} \wedge \SemLabel v = \alpha)} \\[6mm]

\infer{\TRec X P \RewOver {\TRec X P} \subst P X {\TRec X P}}{}\\[3mm]
\hline
\end{array}
\]
\caption{Refined transitions for CCS}
\label{fig:RCCS}
\end{table}

\begin{proposition}
$P \RewOver u P'$ iff $\SemLabel u$ exists and
$P \RewOverOpt {ccs} {\SemLabel u} P'$.
\end{proposition}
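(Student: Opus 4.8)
The statement is an "iff" relating the refined transition relation $\RewOver u$ (Table~\ref{fig:RCCS}) to the original CCS relation $\RewOverOpt{ccs}{}$ through the interpretation $\SemLabel{\cdot}$. The plan is to prove both directions by structural induction on the derivation of a transition. The key is that the refined rules of Table~\ref{fig:RCCS} are in near one-to-one correspondence with the standard CCS rules, the only difference being that each refined label records precisely the structural information (which summand was chosen, which side of a parallel moved, etc.) that the original rules discard. So the induction is essentially a bookkeeping argument checking that $\SemLabel{\cdot}$ undoes exactly this extra bookkeeping.

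\textbf{Left-to-right direction.} First I would assume $P \RewOver u P'$ and proceed by induction on the derivation. The base cases are the choice rule and the recursion rule: for $\underset{j\in I}{\Sigma}\alpha_j.P_j \RewOver{\LabelSend{(\alpha_j,P_j)_{j\in I}}{i}} P_i$ we have $\SemLabel u = \alpha_i$, which exists, and the corresponding CCS rule gives $\underset{j\in I}{\Sigma}\alpha_j.P_j \RewOverOpt{ccs}{\alpha_i} P_i$; the recursion case is analogous with $\SemLabel{\TRec X P} = \tau$ and the CCS unfolding rule. For the inductive cases $(u\mid\bullet)$, $(\bullet\mid u)$, $\nu a.u$ and $(u\mid v)$, I would apply the induction hypothesis to the premise(s), observe that the side conditions on $\SemLabel{\cdot}$ in Table~\ref{fig:RCCS} (e.g.\ $\SemLabel u \notin\{a,\overline a\}$ for restriction, or the complementarity condition $\SemLabel u = \alpha \wedge \SemLabel v = \overline\alpha$, etc.\ for synchronization) are exactly the side conditions of the matching CCS rule once rewritten via the induction hypothesis, and conclude by applying that CCS rule. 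In each case the definition of $\SemLabel{\cdot}$ on the compound label yields precisely the label produced by the CCS rule ($\SemLabel u$ for the parallel/restriction congruence rules, $\tau$ for synchronization).

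\textbf{Right-to-left direction.} Here I would assume $\SemLabel u$ is defined and $P \RewOverOpt{ccs}{\SemLabel u} P'$. The cleanest route is induction on the \emph{structure of the label} $u$ (which is well-founded and mirrors the derivation structure), using at each node the fact that $\SemLabel u$ is defined to ensure the relevant subcase of the CCS transition actually applies, and inverting the CCS derivation to recover the premise needed to fire the refined rule. For instance, if $u = \nu a.u'$ then $\SemLabel u = \SemLabel{u'} \notin\{a,\overline a\}$ and $P = \nu a.Q$, so the CCS transition $\nu a.Q \RewOverOpt{ccs}{\SemLabel{u'}} P'$ must come from $Q \RewOverOpt{ccs}{\SemLabel{u'}} Q'$ with $P' = \nu a.Q'$; the induction hypothesis gives $Q \RewOver{u'} Q'$, and the refined restriction rule applies since its side condition $\SemLabel{u'}\notin\{a,\overline a\}$ holds. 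The synchronization case $u = (u_1\mid u_2)$ is the one needing care: from $\SemLabel u = \tau$ and $P = Q_1\mid Q_2 \RewOverOpt{ccs}{\tau} P'$ one must argue the $\tau$ came from a synchronization (not from an internal $\tau$ of one component), match $\SemLabel{u_1}$ and $\SemLabel{u_2}$ to the complementary actions, apply the induction hypothesis twice, and fire the refined synchronization rule.

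\textbf{Main obstacle.} The only genuinely delicate point is handling $\tau$-transitions in the right-to-left direction: a CCS $\tau$ can arise from a synchronization or be propagated up from a subterm, so one must check that the shape of the label $u$ correctly disambiguates which case occurred and that $\SemLabel{\cdot}$ being defined rules out the mismatched possibilities — together with the mild care needed about $\alpha$-equivalence of channels $a$ and process variables $X$ so that substitution $\subst P X {\TRec X P}$ in the recursion case and the scope of $\nu a$ behave well. Everything else is a routine rule-by-rule correspondence, so I would keep the write-up terse, presenting one representative base case, one congruence case, and the synchronization case in full, and dispatching the rest as "analogous."
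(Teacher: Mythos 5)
Your proof takes essentially the same approach as the paper's: a rule-by-rule correspondence between Table~\ref{fig:RCCS} and the standard CCS rules, argued by induction on the derivation in each direction (the paper inducts on the CCS derivation of $P \RewOverOpt{ccs}{\SemLabel u}P'$ for the backward implication where you induct on the structure of the label $u$, but these are interchangeable since, once $\SemLabel u$ is defined, the shape of $u$ determines which rule must have been applied). The paper dispatches this in two lines, so your extra care about disambiguating the source of a $\tau$ and about $\alpha$-equivalence is additional but consistent detail rather than a different argument.
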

\begin{proof}
By induction on $P \RewOver u P'$ for the forward implication, and by
induction on $P \RewOverOpt {ccs} {\SemLabel u} P'$ for the backward
implication: each rule here corresponds to a rule in the semantics of
CCS~\cite{CCS}.
\end{proof}
Now we only have to define a suitable $\ValPar$. Below, $\upxi$
stands for $u$ or $\bullet$.
\[\begin{array}{c}

\infer{u \ValPar \bullet}{} \qquad

\infer{\bullet \ValPar u}{} \qquad

\infer{(\upxi_1 | \upxi_2) \ValPar (\upxi_1' | \upxi_2')}
{\upxi_1 \ValPar \upxi_1' \quad \upxi_2 \ValPar \upxi_2'} \qquad

\infer{\nu a . u \ValPar \nu a . v}{u \ValPar v}

\end{array}
\]

Informally, $u \ValPar v$ means that the transitions described by $u$
and $v$ operate on separate processes. Note that it would not be
possible to define $\ValPar$ on the original CCS labels, since they do
not contain enough information. The refinement of labels solves this
problem too.

\begin{toappendix}

\appendixbeyond 0

\begin{theorem}\label{th:rccs}

The LTS and the relation $\ValPar$ defined for CCS satisfy Theory~\ref{ax:Val}.

\end{theorem}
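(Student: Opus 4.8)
The plan is to verify the three requirements of Theory~\ref{ax:Val} for the refined CCS of Table~\ref{fig:RCCS} one at a time, in each case exploiting that a refined label carries enough information to pin down which rule produced a given transition.

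For \textbf{determinism} I would argue by induction on the structure of the label $u$ (equivalently, on the derivation of $P \RewOver u P'$), using the inversion principle that the outermost constructor of $u$ uniquely determines the rule of Table~\ref{fig:RCCS} applied last, hence the shape of the source $P$ and the way $P'$ is built from it. For $u = \LabelSend{(\alpha_j, P_j)_{j \in I}}{i}$ and $u = \TRec X P$ the label already records all the relevant data, so $P'$ is fixed outright; for $u = (u_1 | \bullet)$, $u = (\bullet | u_1)$, $u = (u_1 | u_2)$ and $u = \nu a . u_1$ the premises are transitions under strictly smaller labels, so the inductive hypothesis applies to each sub-transition, and in the restriction case $\alpha$-equivalence is used to align the bound name. \textbf{Co-determinism} is handled symmetrically: given $P_1 \RewOver u P'$ and $P_2 \RewOver u P'$, the shape of $u$ again fixes the last rule, hence the shapes of $P_1$, $P_2$ and $P'$; matching the common target $P'$ forces the sub-results to coincide, and the inductive hypothesis on the sub-labels forces the sub-sources to be equal. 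The choice and recursion cases are immediate since there the source is entirely encoded in $u$.

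For the \textbf{co-diamond property} I would argue by induction on the derivation of $u \ValPar v$, reading a $\bullet$ occurring in a sub-label position as a vacuous step that leaves the corresponding sub-term unchanged, so that the componentwise rules defining $\ValPar$ line up with componentwise uses of the statement. As $u \ValPar v$ forces $u$ and $v$ to have matching parallel/restriction structure, the cases are: (i) one component is $\bullet$, where the two steps act on disjoint parts of a parallel composition and plainly commute; (ii) $u = (\upxi_1 | \upxi_2)$ and $v = (\upxi_1' | \upxi_2')$ with $\upxi_1 \ValPar \upxi_1'$ and $\upxi_2 \ValPar \upxi_2'$, where one inverts both transitions by the inversion principle (reading off the sub-processes on which $u$ and $v$ act, and noting that the intermediate term $P_2$ has a unique syntactic decomposition so those sub-processes agree across the two inversions), applies the inductive hypothesis on each component, and reassembles; the side conditions on $\SemLabel{\cdot}$ for synchronization steps are preserved since $\SemLabel{\cdot}$ does not depend on the term; (iii) $u = \nu a . u_1$ and $v = \nu a . v_1$ with $u_1 \ValPar v_1$, handled like (ii) after $\alpha$-aligning the bound name, the restriction side condition again being stable because $\SemLabel{u_1}$ and $\SemLabel{v_1}$ are term-independent.

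The points that deserve care, and where I expect the bulk of the work to lie, are: (a) the inversion step in cases (ii)--(iii) --- showing that from $P_1 \RewOver u P_2$ with $u$ of a given shape the term $P_1$ must itself have the matching shape and the sub-transitions are uniquely extracted, which is precisely where the structured labels pay off; (b) the bookkeeping of reassembly, i.e.\ checking that the transitions rebuilt from the inductive hypotheses genuinely compose into $P_1 \RewOver v P_2' \RewOver u P_3$ with the correct side conditions; and (c) the routine but fiddly handling of $\alpha$-equivalence in the $\nu$ case. None of these is conceptually difficult, but (a) and (b) together are the crux and are exactly why the refinement of labels was designed as it is.
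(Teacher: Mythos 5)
Your proposal is correct and follows essentially the same route as the paper: induction on the (structure of the) label for determinism and co-determinism, exploiting that the label determines the last rule applied and hence the missing term, and a structural induction for the co-diamond property (the paper phrases it as induction on the first label, which amounts to the same case analysis as your induction on the derivation of $u \ValPar v$). The paper leaves all the details you spell out implicit, so your elaboration is a faithful expansion rather than a different proof.
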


\end{toappendix}

\begin{toappendix}[]

\begin{proof}
Determinism and co-determinism are straightforward: the proof is by induction on
the label, and we can notice that for each rule, from the label and a
term, we have enough information to deduce the other term.

The proof of the co-diamond property is by induction on the first
label.
\end{proof}

\end{toappendix}

Thanks to this result, we can apply the framework of
Section~\ref{sec:Formal} to obtain a causal-consistent reversible
semantics for CCS. We also have for free results such as the Loop
lemma or causal consistency.

\begin{example}
Consider the CCS process $a.b.0|\overline{b}.c.0$.
We have, e.g., the two computations below:\\
$a.b.0|\overline{b}.c.0 
\RewOver {\LabelSend {(a, b.0)} 1 | \bullet} 
b.0|\overline{b}.c.0 
\RewOver {\bullet | \LabelSend {(\overline{b}, c.0)} 1}
b.0|c.0 
$\\
$a.b.0|\overline{b}.c.0 
\RewOver {\LabelSend {(a, b.0)} 1 | \bullet} 
b.0|\overline{b}.c.0 
\RewOver {\LabelSend {(b, 0)} 1 | \LabelSend {(\overline{b}, c.0)} 1}
0|c.0 
\RewOver {\bullet | \LabelSend {(c, 0)} 1 }
0|0 
$\\
In the first computation $\LabelSend {(a, b.0)} 1 | \bullet \ValPar
\bullet | \LabelSend {(\overline{b}, c.0)} 1$, hence the two
actions can be reversed in any order.  In the second computation
neither $\LabelSend {(a, b.0)} 1 | \bullet \ValPar \LabelSend {(b, 0)}
1 | \LabelSend {(\overline{b}, c.0)} 1$ nor $\LabelSend {(b, 0)} 1
| \LabelSend {(\overline{b}, c.0)} 1 \ValPar \bullet | \LabelSend
{(c, 0)} 1$ hold, hence the actions are necessarily undone in reverse
order. These two behaviors agree with both the standard notion of
concurrency in CCS, and with the behaviors of the causal-consistent
reversible extensions of CCS in the
literature~\cite{rccs,revUlidowski}. Indeed we conjecture that our semantics and the ones in the literature are equivalent.
\end{example}
More generally, in $(P | Q)$ reductions of $P$ and of $Q$ are
concurrent as expected, and the choice of reducing first $P$, then $Q$
or the opposite has no impact: if $P \RewOver u P'$ and $Q \RewOver v
Q'$, then we can permute $(P | Q) \RewOver {(u | \bullet)} (P' | Q)
\RewOver {(\bullet | v)} (P' | Q')$ into $(P | Q) \RewOver {(\bullet |
  v)} (P | Q') \RewOver {(u | \bullet)} (P' | Q')$.  Also, e.g., in
the rule for right parallel composition,
the label $(u | \bullet)$ is independent from $Q$.
For this reason we can permute the order of two concurrent
transitions without changing the labels.

Labels can be seen as derivation trees in the original CCS with some
information removed.  Indeed, for every derivation rule in CCS, there
is a production in the grammar of the labels. When extracting the
labels from the derivation trees:
\begin{itemize}

\item We must keep enough information from the original derivation
  tree to preserve determinism and co-determinism, and to define the
  $\ValPar$ relation.

\item We must remove enough information so that we can permute
concurrent transitions without changing the labels.
For example, in the label $(u | \bullet)$, there is no information on
the term which is not reduced.



\item Our labels and transition rules are close to the ones of the
  causal semantics of Boudol and Castellani~\cite{Boudol89}. Actually,
  our labels are a refinement of the ones of Boudol and Castellani: we
  need this further refinement since their transitions are not
  co-deterministic.
\end{itemize}
If we chose to take the whole derivation tree as a label, we would
have the same problem as in Remark~\ref{rmk:NaiveRefineLabel}: we
would have determinism and co-determinism, but we could not have a
definition of $\ValPar$ capturing the concurrency model of CCS.


\section{Examples with X-machines}
\label{sec:Automata}
X-machines \cite{xmachines} (also called Eilenberg machines) are a model of
computation that is a generalization of the concept of automaton.
Basically, they are just automata where transitions are relations over a set
$D$.
The set $D$ represents the possible values of a memory, and transitions modify
the value of the memory.
\begin{definition}[X-machines]
\mbox{}\\
An \emph{X-machine} on a set $D$ is a tuple $\mathcal{A} = (Q, I, F, \delta)$
such that:
\begin{itemize}
\item $Q$ is a finite set of states.
\item $I$ and $F$ are subsets of $Q$, representing initial and final states.
\item $\delta$ is a finite set of triplets $(q, \alpha, q')$ such that $q,
q' \in Q$ and $\alpha$ is a binary relation on $D$, defining the
transitions of the X-machine.
\end{itemize}
\end{definition}
The semantics of an X-machine is informally described as follows:
\begin{itemize}
\item The X-machine takes as input a value of $D$ and starts in an
initial state.

\item When the X-machine takes a transition, it applies the
relation to the value stored in the memory.

\item The value stored in the memory in a final state is the output.

\end{itemize}
This can be formalized using an LTS whose configurations are pairs
$(q,x)$ where $q$ is the state of the X-machine and $x$ the value of
the memory, and transitions are derived using the following inference
rule:
$$\infer{(q, x) \RewOver {\alpha} (q', y)}
{(q, \alpha, q') \in \delta \quad (x, y) \in \alpha}$$
X-machines are naturally a good model of sequential computations (both
deterministic and non-determi\-nistic).  In particular, a Turing machine
can be described as an X-machine~\cite{xmachines}.

X-machines have also been used as models of concurrency~\cite{comx}.
Below we will consider only a simple concurrent model:
several X-machines running concurrently and working on the same memory. 
This represents a set of sequential processes, one for each machine,
interacting using a shared memory.
We will start from the case where there are only two machines.
We will refine this model so that the refinement belongs to Theory~\ref{ax:Val}
and so that we can apply our framework.

First, we want to extend a single X-machine to make it reversible.
Notice that a single X-machine is a sequential model, hence at this
stage we have a trivial concurrency relation.
The LTS may be not deterministic and/or not co-deterministic both
because of the relation $\delta$, and because of the relation
$\alpha$. Hence, we will need to refine labels. For $\delta$, we can use the approach in Remark~\ref{rmk:NaiveRefineLabel}.
For actions $\alpha$, the approach is to split
each action $\alpha$ into a family of (deterministic and
co-deterministic) relations $(\alpha_i)_{i \in I}$ such that $\alpha =
\bigcup_{i \in I} \alpha_i$, and add to the label the index $i$ of the used $\alpha_i$.

\begin{definition}
Assume $D$ is a set, and $\alpha, \beta \in \mathcal{P}(D \times D)$.
We write $\alpha \ValPar \beta$ if and only if
$\alpha \circ \beta = \beta \circ \alpha$, where $\circ$ is the composition of
relations.
\end{definition}
\begin{definition}[Refined action]
We call a \emph{refined action} on $D$ an object $a$ such that:
\begin{itemize}

\item $\AHinst a$ is a set, representing the indices of the elements of the
splitting.

\item For all $i \in \AHinst a$, $a(i) \in \mathcal{P}(D \times D)$ with
$a(i)$ and $a(i)^{-1}$ functional relations
($a(i)$ is deterministic and co-deterministic).
\end{itemize}
\end{definition}
Notice that $\bigcup_{i \in \AHinst a} a(i)$ is indeed a relation on $D$.
Therefore, by forgetting the information added by the refinement
(how the action is split), a refined action
can be interpreted as a simple relation on $D$.

For instance, the following relations on $X^3$:
\[\begin{array}{c}
 \alpha = \{ ((x, y, z), (x, x, z)) \, \mid \, x, y, z \in X\} \quad
\beta = \{ ((x, y, z), (x, y, x)) \, \mid \, x, y, z \in X\}
\end{array}
\]
are not co-deterministic but can be refined as follows:
\begin{example}
Assume $X$ is a set and $D = X^3$.
We define the refined actions $a$ and $b$ on $D$ by:
\begin{itemize}

\item Setting $\AHinst a = \AHinst b = X$

\item For all $y \in X$, $a(y) \eqdef \{ ((x, y, z), (x, x, z)) ~ \mid ~
x, z \in X\}$.

\item For all $z \in X$, $b(z) \eqdef \{ ((x, y, z), (x, y, x)) ~ \mid ~
x, y \in X\}$.
\end{itemize}
Here $D$ represents a memory with three variables with values in $X$.
The action $a$ (resp. $b$) copies the first variable to the second
(resp. third) variable, indeed $\alpha = \bigcup_{y \in X} a(y)$ and
$\beta = \bigcup_{z \in X} b(z)$.

Notice that for all $y, z \in X$, $a(y) \ValPar b(z)$.
These actions $a$ and $b$ are indeed independent. Furthermore, when actions are permuted, the indices (here $y$ and $z$) remain the same:
$a(y) \circ b(z) = b(z) \circ a(y)$.
\end{example}
It is always possible to refine a given action by splitting
it into singletons:
For instance, the above actions $\alpha$ and $\beta$ can also be refined
as follows:
\begin{example}
We define the refined actions $a'$ and $b'$ as follows:
\begin{itemize}

\item $\AHinst {a'} \eqdef \AHinst {b'} = X^3$.

\item For all $x, y, z \in X$, $a'(x, y, z) \eqdef
\{ ((x, y, z), (x, x, z)) \}$.

\item For all $x, y, z \in X$, $b'(x, y, z) \eqdef
\{ ((x, y, z), (x, y, y)) \}$.
\end{itemize}
The refined action $a'$ (resp.\ $b'$) is another splitting of the action
$a$ (resp.\ $b$).

Unfortunately we generally do not have $a'(i) \circ b'(j) = b'(j) \circ a'(i)$.
Therefore $a$ (resp.\ $b$) and $a'$ (resp.\ $b'$) describe the same relation
$\alpha$ (resp.\ $\beta$) on $D$ but do not allow the same amount of concurrency.
Indeed, $a$ and $b$
allow one to define a non trivial $\ValPar$ while $a'$ and $b'$ do
not.
\end{example}
The two previous examples show that when refining an action,
the splitting must not be too thin to have a reasonable amount of concurrency.
In the examples above, $a$ and $b$ is a good refinement of $\alpha$ and $\beta$
but $a'$ and $b'$ is not.
The reason is the same as in
Remark~\ref{rmk:NaiveRefineLabel}.

Notice that, usually, we can refine any action that reads and writes
parts of the memory and the splitting must be indexed by the erased
information, if the original relation is not co-deterministic, and by
the created information, if the original relation is not
deterministic.

The next example shows how to model a simple imperative language
with a refined X-machine:

\begin{example}
An environment $\rho$ is a total map from an infinite set of variables
to $\mathds{N}$ such that the set of variables $x$ with $\rho(x) \neq
0$ is finite.  Let $D$ be the set of environments.
\begin{enumerate}
\item Assume that $x_1$, \ldots, $x_n$, $y$ are variables and $f$ is a total map
from $\mathds{N}^n$ to $\mathds{N}$.
Let $\alpha$ be the action on $D$ defined as follows:
\[\alpha \eqdef \{ (\rho, \extenv \rho y {f(\rho(x_1), \ldots, \rho(x_n))} \mid
\rho \in D\}\]
Then, $\alpha$ can be refined by defining the action $a$ as follows:
\begin{itemize}

\item $\AHinst a \eqdef \mathds{N}$

\item For all $v \in \AHinst a$,
$a(v) \eqdef \{ (\rho, \rho') \mid
\rho(y) = v \wedge \rho' = \extenv \rho y
{f(x_1, \ldots, x_n)}\}$.
\end{itemize}
This refined action is written $\AAssign y f {x_1, \ldots x_n}$.
When $f$ is injective, we do not need to refine $\alpha$:
It is already deterministic and co-deterministic.

\item Similarly, we can define the action $y +\!\!= f(x_1, \ldots,
  x_n)$ when for all $i$, $x_i \neq y$. This is the form of assignment
  used by Janus~\cite{Yoko07}, a language which is naturally
  reversible, and where reversibility is ensured by restricting the
  allowed constructs w.r.t.\ a conventional language. Indeed, we can
  see that the corresponding relation is deterministic and
  co-deterministic.

\item Assume that $x_1$, \ldots, $x_n$ are variables and $u$ is a subset of
$\mathds{N}^n$.
Let $\alpha$ be the action defined as follows:
\[\alpha \eqdef \{ (\rho, \rho) \mid \rho \in D \wedge (\rho(x_1), \ldots
\rho(x_n)) \in u\}\]
This action is used to create a branching instruction.
Relation $\alpha$ is already deterministic and co-deterministic, hence we do not
need to refine it.
It is written $\ATest {x_1, \ldots x_n} u$.
\end{enumerate}
Then, we can define the $\ValPar$ relation as usual: two actions are dependent
if there is a variable that both write, or that one reads and one
writes, independent otherwise. The functions $\text{rv}$ and $\text{wv}$
below compute the sets of read variables and of written variables,
respectively.
\[\begin{array}{rclrcl}

\ReadVar {\AAssign y f {x_1 \ldots x_n}} &\eqdef& \{x_1, \ldots x_n\} \qquad &
\WriteVar {\AAssign y f {x_1 \ldots x_n}} &\eqdef& \{y\}\\
\ReadVar {y +\!\!= f(x_1, \ldots, x_n)} &\eqdef& \{x_1, \ldots x_n\} \qquad &
\WriteVar {y +\!\!= f(x_1, \ldots, x_n)} &\eqdef& \{y\}\\
\ReadVar {\ATest {x_1 \ldots x_n} u} &\eqdef& \{x_1, \ldots x_n\} \qquad &
\WriteVar {\ATest {x_1, \ldots x_n} u} &\eqdef& \emptyset
\end{array}
\]
We have $a \ValPar b$ if and only if all the following conditions are satisfied:\\
$\ReadVar a \cap \WriteVar b = \emptyset \qquad$
$\ReadVar b \cap \WriteVar a = \emptyset \qquad$
$\WriteVar a \cap \WriteVar b = \emptyset$\\
%
We can then check that if $a \ValPar b$, then $a \circ b = b \circ a$.

\end{example}

Now we can define a refined X-machine, suitable for reversibility:

\begin{definition}
A \emph{refined X-machine} on $D$ is $\mathcal{A} = (Q, I, F, \delta)$ such
that:
\begin{itemize}

\item $Q$ is a finite set.

\item $I$ and $F$ are subsets of $Q$.

\item $\delta$ is a finite set of triplets $(q, a, q')$ such that
$q, q' \in Q$ and $a$ is a refined action.
\end{itemize}
\end{definition}
If we forget the refinement of the action we exactly have an
X-machine. In the semantics, each label would contain both the used
action $a$ and the index of the element of the split which is used.

We can now build systems composed by many X-machines interacting using
a shared memory.
\begin{example}

\label{ex:A2}

Assume we have two X-machines $\mathcal{A}_1 = (Q_1, I_1, F_1,
\delta_1)$ and $\mathcal{A}_2 = (Q_2, I_2, F_2, \delta_2)$ on $D$. We
want to describe a model composed by the two X-machines, acting on a
shared memory.

Terms $M$ are of the form $(q_1, q_2, x)$ where $q_1 \in Q_1$ is the
current state of the first X-machine, $q_2 \in Q_2$ is the current
state of the second X-machine and $x \in D$ is the value of the memory.  
Labels $u$ for the transitions are of the form $(k, q, a, q', i)$ with
$k \in \{1, 2\}$, $(q, a, q') \in \delta_k$ and $i \in \AHinst a$.
Here $k$ indicates which X-machine moves, $q$, $a$ and $q'$ indicate which
transition the moving X-machine performs, and
$i$ indicates which part of the relation is used.

The relation $\ValPar$ is defined as follows:
\[(k, q_1, a, q_1', i) \ValPar (k', q_2, b, q_2', j) \qquad \text{iff}
\qquad k \neq k' \wedge a(i) \ValPar a(j)\]
It means that two steps are independent if and only if they are performed
by two distinct X-machines and the performed actions are independent.

Transitions are defined by the following rules:
\[\begin{array}{cc}

\infer{(q_1, q_2, x) \RewOver {(1, q_1, a, q_1', i)} (q_1', q_2, y)}
{(q_1, a, q_1') \in \delta_1 \quad i \in \AHinst a \quad
(x, y) \in a(i)} \quad &

\infer{(q_1, q_2, x) \RewOver {(2, q_2, a, q_2', i)} (q_1, q_2', y)}
{(q_2', a, q_2') \in \delta_2 \quad i \in \AHinst a \quad
(x, y) \in a(i)}
\end{array}
\]

Then, we have the objects and properties of Theory~\ref{ax:Val}.
In particular, we have determinism and co-determinism, because in the
label $u = (k, q, a, q', i)$, $a(i)$ is deterministic and co-deterministic.

\end{example}

\begin{example}

\label{ex:AN}

Example~\ref{ex:A2} can be generalized to $n$ refined X-machines
(the terms $M$ being of the form $(q_1, \ldots, q_n, x)$).

\end{example}

\begin{example}

\label{ex:A1}

By adding to the model in Example~\ref{ex:AN} the restriction
``All the X-machines are equal''
we do not lose any expressiveness.
This will be relevant for the next example.
\iftoggle{tech_report}{
\begin{toappendix}[This is shown in Appendix.]

\noindent
\textbf{Details of the result in Example~\ref{ex:A1}}\\
We illustrate the reduction of the general case to the case where all
the $n$ X-machines are equal in the case $n = 2$ (which corresponds to
Example~\ref{ex:A2}).  The generalization for $n$ X-machines is
straightforward.
\begin{itemize}

\item Let $D' \eqdef \{0, 1, 2\} \times \{0, 1, 2\} \times D$.

\item Let $Q \eqdef \{i_0\} \cup Q_1 \cup Q_2$
(disjoint union).

\item Let $I \eqdef \{i_0\}$.

\item Let $F \eqdef F_1 \cup F_2$.

\item For each refined action $a$ on $D$,
let $\IntA a$ be the refined action on $D'$ defined by:
\begin{itemize}

\item $\AHinst {\IntA a} \eqdef \AHinst a$.

\item For all $i \in \AHinst a$,
$\IntA a (i) \eqdef \{((k_1, k_2, x), (k_1, k_2, y)) ~ \mid ~
k_1, k_2 \in \{1, 2\} \wedge k_1 \neq k_2 \wedge (x, y) \in a(i)\}$.

\end{itemize}

\item If $k \in \{1, 2\}$, then $\TakeRole k$ is the refined action on $D'$
defined by:
\begin{itemize}

\item $\AHinst {\TakeRole k} \eqdef \{1, 2\}$.

\item $\TakeRole k (1) \eqdef \{ ((0, k', x), (k, k', x)) ~ \mid ~
k' \in \{0, 1, 2\} \wedge k' \neq k \wedge x \in D\}$.

\item $\TakeRole k (2) \eqdef \{ ((k', 0, x), (k', k, x) ~ \mid ~
k' \in \{0, 1, 2\} \wedge k' \neq k \wedge x \in D\}$.

\end{itemize}

\item Let $\delta$ be a transition relation where elements are either:
\begin{itemize}

\item $(i_0, \TakeRole k, q)$ with $k \in \{1, 2\}$ and $q \in I_k$.

\item $(q, \IntA a, q')$ with $(q, a, q') \in \delta_k$ and $k \in \{1, 2\}$.

\end{itemize}

\end{itemize}

This idea can be illustrated with a theater play comparison:
\begin{itemize}

\item Every actor can play any role.

\item Two different actors cannot play the same role.

\item The play can only start when each role has been attributed to an actor.

\item Even if an actor could have played any role, once he has chosen a role,
he cannot change it.

\end{itemize}

\end{toappendix}
}{
This is shown in the companion technical report~\cite{TR}.
}

\end{example}

\begin{example}

\label{ex:AInf}

If the set of initial states of each X-machine is a singleton, we can
generalize Example~\ref{ex:A1} to a potentially infinite number of
X-machines, where however only a finite amount of them are not in their
initial state. However, an unbounded number of X-machines may have moved.

More formally, if we have a refined
X-machine $\mathcal{A} = (Q, \{ i_0\}, F, \delta)$.

\begin{itemize}

\item The labels $u$ are of the form $(k, q, a, q', i)$ with
$k \in \mathds{N}$, $(q, a, q') \in \delta$, and $i \in \AHinst a$.

\item The terms $M$ are of the form $(f, x)$ with $x \in D$ and
$f$ a total map from $\mathds{N}$ to $Q$ such that the set of
$k \in \mathds{N}$ with $f(k) \neq i_0$ is finite.

\item $\ValPar$ and $\RewOver u$ are defined similarly to their respective
counterpart in Example~\ref{ex:A2}.

\end{itemize}

The objects above satisfy the properties of Theory~\ref{ax:Val}.

\end{example}

In all the previous examples, as for CCS, we can apply the framework
of Section~\ref{sec:Formal} to define a causal-consistent reversible
semantics and have for free Loop lemma and causal consistency.

Example~\ref{ex:AInf} allows one to simulate the creation of new
processes dynamically.
Moreover, since we have no limitation on $D$, we can choose $D$ so to represent
an infinite set of communication channels.
We can also add the notion of synchronization between two X-machines.
Therefore we conjecture that by extending Example~\ref{ex:AInf} we
could get a reversible model as expressive as the $\pi$-calculus.

\section{Conclusion and Future Work}\label{sec:Concl}

We have presented a modular way to define causal-consistent reversible
extensions of formalisms as different as CCS and X-machines. This
contrasts with most of the approaches in the
literature~\cite{rccs,rhopi,revpi,revMuOz,GiachinoLMT15}, where specific
calculi or languages are considered, and the technique is heavily tailored to the
chosen calculus. However, two approaches in the literature are more
general. \cite{revUlidowski} allows one to define causal-consistent
reversible extensions of calculi in a subset of the path format.  The
technique is fully automatic. Our technique is not, but it can
tackle a much larger class of calculi.
In particular, X-machines do not belong to the path format since their terms
include an element of the set of values $X$, and $X$ is arbitrary.
\cite{Sobocinski} presents a
categorical approach that concentrates on the interplay between
reversible actions and irreversible actions, but provides no results
concerning reversible actions alone.

As future work we plan to apply our approach to other formalisms,
starting from the $\pi$-calculus, and to draw formal comparisons
between the reversible models in the literature and the corresponding
instantiations of our approach. We conjecture to be able to prove the
equivalence of the models, provided that we abstract from syntactic
details. A suitable notion of equivalence for the comparison is barbed
bisimilarity.
Finally, we could also show that the construction from terms to
reversible configurations given in Section~\ref{sec:Formal} is
actually monadic and that the algebras of this monad are also
relevant, since they allow one to inject histories into terms and make
them reversible.

\bibliographystyle{eptcs}
\bibliography{biblio}

\iftoggle{tech_report}{
\newpage
\appendix 

\input{proofs}
\input{toappendix}
}{
}

\end{document}